\numberwithin{equation}{section}
\newtheorem{theorem}{Theorem}[section]
\newtheorem{lemma}[theorem]{Lemma}
\newtheorem{proposition}[theorem]{Proposition}
\newtheorem{corollary}[theorem]{Corollary}
\newtheorem{remark}[theorem]{Remark}
\newtheorem{definition}[theorem]{Definition}
\theoremstyle{definition}
\renewcommand{\tilde}{\widetilde}          
\DeclareMathSymbol{\leqslant}{\mathalpha}{AMSa}{"36} 
\DeclareMathSymbol{\geqslant}{\mathalpha}{AMSa}{"3E} 
\DeclareMathSymbol{\eset}{\mathalpha}{AMSb}{"3F}     
\renewcommand{\leq}{\;\leqslant\;}                   
\renewcommand{\geq}{\;\geqslant\;}                   
\newcommand{\C}{\mathbb{C}}
\newcommand{\D}{\mathbb{D}}
\newcommand{\R}{\mathbb{R}}
\newcommand{\Z}{\mathbb{Z}}
\newcommand{\E}{\mathds{E}}
\renewcommand{\P}{\mathds{P}}
\newcommand{\hf}{\frac{_1}{^2}}
\def\bi{\begin{itemize}}
\def\ei{\end{itemize}}
\def\bnum{\begin{enumerate}}
\def\enum{\end{enumerate}}
\def\<#1{\langle #1 \rangle}
\def\z{\mathbf{z}}
\newcommand{\caF}{{\mathcal F}}
\newcommand{\caH}{{\mathcal H}}
\newcommand{\caN}{{\mathcal N}}
\newcommand{\caO}{{\mathcal O}}
\newcommand{\caR}{{\mathcal R}}
\newcommand{\caT}{{\mathcal T}}
\newcommand{\caV}{{\mathcal V}}
\title{Constructive Liouville Conformal Field Theory  \footnote
{ Lectures at the Cargese summer school  "Quantum integrable systems, conformal field theories and stochastic processes"}}
\author{ Antti Kupiainen \footnote{ Supported by the Academy of Finland}\\University of Helsinki, Department of Mathematics and Statistics\\ antti.kupiainen@ helsinki.fi}
\date{}
\begin{document}

\maketitle
 \begin{abstract}
 These lectures give an introduction to   a probabilistic approach to Liouville Quantum Field Theory developed in a joint work with F. David, R. Rhodes and V. Vargas.

 \end{abstract}

\section{Probabilistic Liouville Theory} 

One of the simplest and at the same time most intriguing of the Conformal Field Theories (CFT) is the Liouville CFT  (LCFT). It first appeared in Polyakov's formulation of String Theory \cite{Pol} and then  in the work of Knizhnik, Polyakov and Zamoldchicov \cite {KPZ} on the relationship between CFT's in fixed background metric and in a random metric (2d gravity)  or in other terms on the relationship between statistical mechanics models on fixed lattices and on random lattices. 
Decisive progress in LCFT came in the 90's as Dorn and Otto \cite{Do} and Zamolodchikov and Zamolodchikov \cite{ZZ} produced an explicit formula for Liouville three point functions, the celebrated DOZZ-formula. More recently, the Liouville three point functions were shown  to have a deep relationship to four dimensional Yang Mills theories \cite{AGT}. 

Unlike most CFT's the LCFT has an explicit functional integral formulation. However the exact results for Liouville correlations are not derived from this functional integral but rather from general principles of CFT (BPZ equations, crossing symmetry) coupled to assumptions about the spectrum of LCFT \cite{Rib}.  A rigorous probabilistic  formulation of the LCFT was given in  \cite{DKRV}. In that work it was shown that  the LCFT functional integral can be defined using the theory of  Gaussian Multiplicative Chaos, a well studied chapter of probability theory.  In this paper I will review in Section 1 this construction as well as the proof of local conformal invariance given in\cite{krv}. In Section 2 it is shown how the Quantum Field Theory structure and a representation of the Virasoro algebra can be derived from the probabilistic theory. 

\subsection{Scaling Limits} 
One of the motivations for the study of LCFT comes from the study of scaling limits of discrete models of random surfaces. This subsection gives a brief summary of the discrete objects that LCFT aims to describe.

\subsubsection{Random triangulations} By a {\it triangulation} of the unit sphere we mean a finite connected graph  $T$ s.t. there is an  embedding of $T$ to the two dimensional sphere  ${S}^2$ s.t. each connected component of $	S^2\setminus T$ (a {\it face})  has a boundary consisting of 3 edges (we denote the embedding of $T$ by $T$ again). 
A {\it marked } triangulation is a triangulation $T$ together with a choice of three vertices $v_1,v_2,v_3$ of $T$ . We denote by  $\mathcal{T}$  the set of marked triangulations, by $\caV(T)$ the set of  vertices of $T$ and by $|T|$ the number of faces  in $T$. 

Next we want to consider probability measures on  $\mathcal{T}$. The simplest example is the case of "pure gravity". We define the probability
\begin{align*}
\P_{{\mu_0},\sqrt{\frac{8}{3}}}(T)=\frac{1}{Z_{{\mu_0},\frac{8}{3}}}
e^{-{{\mu_0}} |T|}
\end{align*}
(for the index  $\sqrt{\frac{_8}{3}}$, see below) where $Z_{{\mu_0},\sqrt{\frac{8}{3}}}=\sum_{T\in\caT}
e^{-{{\mu_0}} |T|}$. For other examples we add "matter" to the gravity model.
Given a triangulation $T$ one may consider statistical mechanics models on it. For example, for the Ising model one defines "spin" variables $\sigma_v\in\{1,-1\}$ indexed by the vertices $v\in\caV(T)$ of $T$ and considers the joint probability distribution 
on triangulations $T$ and spin configurations $\sigma=\{\sigma_v\}_{v\in\caV(T)}$ defined by
$$
\P_{{\mu_0},\sqrt{3}}(T,\sigma)=\frac{1}{Z_{{\mu_0},\sqrt{3}}}
e^{-{{\mu_0}} |T|}
e^{\beta_c\sum_{v\sim v'}\sigma_v\sigma_{v'}}
$$ where $v\sim v'$ means $v,v'$ share an edge. The parameter $\beta_c$ is the critical value for the inverse temperature of the Ising model, known to exist in this setup.  More generally there are many other critical statistical mechanical models one can define on $T$. Their marginal distributions on $\caT$ are all of the form
\begin{align}
\P_{{\mu_0},\gamma}(T)=\frac{1}{Z_{{\mu_0},\gamma}}
e^{-{{\mu_0}} |T|}Z_{{\gamma}}(T)
\label{Pdef}
\end{align}
Here  $Z_{{\gamma}}(T)$ is the partition function of  the statistical model 
on the graph $T$ and $\gamma$ is a parameter depending on that model. It is related to its central charge $c$ by $c=25-6Q^2$, $Q=\frac{\gamma}{2}+\frac{2}{\gamma}$. In particular $\gamma\in [\sqrt{2},2]$ corresponding to $c\in [-2,1]$. Examples of models covering this whole range are given by the $O(N)$  loop models. The  $\gamma=2$ case is discrete Gaussian Free Field where   $Z_{{\gamma}}(T))=\det (-\Delta_T)^{-\hf}$ with $\Delta_T$ the Laplacean on the graph $T$ and the   $\gamma=\sqrt{2}$ case is the uniform spanning tree with  $Z_{{\gamma}}(T))=\det (-\Delta_T)$. 

 It is known that
\begin{equation}
Z_N:=\sum\limits_{T \in \mathcal{T}: |T|=N} Z_{\gamma}(T)=N^{1-\frac{4}{\gamma^2}}e^{\bar \mu N}(1+o(1))\label{ASY}
\end{equation}
where $\bar\mu$ depends on the model. In particular this implies that  $\P_{\mu_0,\gamma}$ is defined for ${\mu_0}>\bar \mu$ and 
$\lim_{{\mu_0}\downarrow \bar\mu}Z_{\mu_o,\gamma}=\infty$ if $\gamma\geq\sqrt 2$.
Hence as $\mu_0\to\bar\mu $ the measure samples  large triangulations.

\subsubsection{Conformal structure}
For each $T$ we may associate a conformal structure on $S^2$ as follows. Assign to each face $f$ a copy $\Delta_f$ of an equilateral triangle $\Delta$ of unit area and let $M_T=\sqcup \Delta_f/\sim$ where in  the disjoint union of  the $\Delta_f$ we identify the common edges. $M_T$ is a topological manifold homeomorphic to $S^2$. 

We can make $M_T$ a complex manifold by the following atlas. It consists of the following coordinate patches. First,  interiors of $\Delta_f$ are  mapped by identity to $\Delta$. Second, for each pair of faces $f$ and $f'$ that share an edge we map the interiors of $\Delta_f\cup\Delta_{f'}$ by identity to two copies of the standard triangle $\Delta$ sitting next to each other in $\C$. Finally for each vertex $v\in M$  we map its neighbourhood  to $\C$ as follows. First, list the faces sharing $v$ in consecutive order: $f_0,\dots,f_{n-1}$. Then  parametrize the set $\Delta_{f_j}\cap U$  by $z_j=re^{2{\pi} i\theta_j}$ with $\theta_j\in [6j/n,6(j+1)/n]$. Then $z\to z^{n/6}$ provides a complex coordinate for  a neigborhood of $v$. This atlas makes $M_T$ a complex manifold homemorphic to $S^2$. Picking three points $z_1,z_2,z_3\in \C$ there is a unique conformal map $\psi_T: M_T\to\hat\C$ s.t. $\psi(v_i)=z_i$ where $\hat\C$  is the Riemann sphere, see Section \ref{secKPZ}.

Let $\lambda_T$ be the area measure on $M_T$   i.e.  $\lambda_T$ is the Lebesque measure in the local coordinate on $\Delta_f$.  Let $\gamma_T$ be the Riemannian  metric on $M_T$  which in  the local coordinate on $\Delta_f$ is given by $dx\otimes dx+dy\otimes dy$. We may transport these objects to $S^2$ by the conformal map $\psi_T$.  
If we now sample $T$ 
from $\P_{{\mu_0},\gamma}$, these become  a random measure $\nu_{{{\mu_0}},\gamma}$ and a random Riemannian metric $G_{{{\mu_0}},\gamma}$ on $\hat\C$. In the standard coordinate of $\hat\C$ they are given by  $\nu_{{{\mu_0}},\gamma}=g_{{{\mu_0}},\gamma}(z)dz$\footnote{we denote the Lebesque measure on $\C$ by $dz$} and $G_{{{\mu_0}},\gamma}=g_{{{\mu_0}},\gamma}(z)(dx\otimes dx+dy\otimes dy)$ where the density $g_{\mu_0,\gamma}$ is singular at the images of the vertices with $n>6$. 

\subsubsection{Scaling limit}
Consider now a {\it scaling limit} as follows. Recalling that as ${\mu_0}\downarrow\bar\mu$ typical size of triangulation diverges we define
for $\mu>0$ 
\begin{align*}
\rho^{(\epsilon)}_{\mu,\gamma}:=\epsilon\nu_{{\bar\mu+\epsilon\mu},\gamma}\ \ \ \
g^{(\epsilon)}_{\mu,\gamma}:=\epsilon^{a_\gamma}\Gamma_{{{\mu_0}},\gamma}
\end{align*}
Then its is conjectured that  $\rho^{(\epsilon)}_{\mu,\gamma}$ converges as $\epsilon\to 0$ to a random measure $\rho_{\mu,\gamma}$ on $\hat\C$ and the metric space defined by $g^{(\epsilon)}_{\mu,\gamma}$ 
converges to a random metric space.
In the case $\gamma=\sqrt{8/3}$ $a_\gamma=\hf$ and this was proven 
Le Gall \cite{LeGall} and Miermont \cite{Mier} and the random metric was constructed directly in the continuum by Miller and Sheffield \cite{MS} .
Since $\epsilon\nu_{T}({S}^2 )=\epsilon N$ the asymptotics \eqref{ASY} implies that the law of $\rho_{\mu,\gamma}({S}^2 ) $ is given by
$$
\E[  F(\rho_{\mu,\gamma}({S}^2 )   )  ]=\lim_{\epsilon\to 0} \frac{1}{Z_{\epsilon}}\sum_N e^{-\mu \epsilon N}N^{1-\frac{4}{\gamma^2}}
F(\epsilon N )
$$
i.e. the law is $\Gamma(2-\frac{4}{\gamma^2},\mu ) $. In what follows we will construct a random measure that has this law  for its total mass and is a candidate for the scaling limit. 
 
 As another example of a limiting object consider the case of Ising model ($\gamma=\sqrt 3$). We can transport the Ising spins $\sigma_v=\pm 1$ sitting at vertices $v$ of $T$ to $\hat\C$. Define the 
 distribution
 \begin{align}\label{ising}
 \Phi_T^{(\epsilon)}(z)=\epsilon^{\frac{5}{6}}\sum_{v\in{\cal V}(T)}\sigma_v\delta(z-\psi_T(v)).
 \end{align}
Then under $\P_{\mu_0+\epsilon\mu,\gamma}$ this becomes a random field on $\hat\C$ and we will get  a conjecture for its distribution as $\epsilon\to 0$ in terms of the correlation functions of  the Liouville QFT, see Section \ref{lme}.

\subsection{KPZ Conjecture} \label{secKPZ}

Locality and coordinate invariance are the basic principles of relativistic physics. Locality means that the basic objects are fields that are functions on the space-time manifold  $M$ (string theory is an exception to this) and their dynamics is determined by an action functional that is local in the fields and their derivatives, e.g. the free scalar field has
$$
S(\phi)=\int_M (\nabla\phi)^2dx.
$$  
(General) relativity enters also through a local field, (pseudo) Riemannian metric $g(x)=\sum g_{\alpha\beta}(x)dx^\alpha\otimes dx^\beta$. In (Euclidean) quantum gravity one looks for a probability law in the space of fields.  Coordinate invariance means that this law should be invariant under coordinate transformations i.e. under  the action of the group of diffeomorphisms $Diff(M)$. Hence in particular this law lives on the space of metrics modulo diffeomorphisms $Met(M)/Diff(M)$. In two dimensions this space is particularly simple. In particular on the sphere $S^2$ any two smooth metrics $g,g'$  are, modulo a diffeomorphism, conformally equivalent, i.e.   $f^\ast g'=e^\varphi g$ for $\varphi:M\to\R$. 

\subsubsection{Conformal metrics} Recall that the Riemann sphere $\hat\C=\C\cup\{\infty\}$ can be covered by two coordinate patches $\hat\C\setminus\{\infty\}$ and $\hat\C\setminus \{0\}$ with the coordinates 
$z$ and $z^{-1}$. A  conformal metric on $\C$   is given  by $\hf g(z)(dz\otimes d\bar z+d\bar z\otimes dz)$
which becomes $g(1/z)|\bar zz|^{-2}$ on the other patch. Hence  if $g$ is  continuous  on  $\hat\C$  this means  $g(z)=\caO(|\bar zz|^{-2})$ at infinity. The round metric is given by 
\begin{align}
\hat g(z)=4(1+|\bar zz|)^{-2}
\label{hat g}
\end{align}
and it has the area is $\int_\C\hat g=4\pi$ and the scalar curvature $R_g:=-4g^{-1}\partial_{\bar z}\partial_z\ln g$ is  constant: $R_{\hat g}=2$. For all smooth conformal metrics one computes $\int gR_g=\hf i\int\bar\partial\partial\ln g=8\pi$, an instance of the Gauss-Bonnet theorem. 
Given a conformal metric on $\hat\C$ we can define the Sobolev space $H^1(\hat \C,g)$ with the norm 
$$\|f\|_g^2:=\int(|\partial_zf|^2+g(z)|f|^2)dz.$$
 These norms are equivalent for all continuous conformal metrics. Finally we define $H^{-1}(\hat \C,g)$ as the dual space and denote the dual pairing by $\langle X,f\rangle$. Formally  $\langle X,f\rangle=\int X(z)f(z)g(z)dz$.

\subsubsection{Liouville QFT}

For (Euclidean) quantum gravity on $\hat \C$ one is thus looking for the probability law of the conformal metric $\hf e^\varphi(dz\otimes d\bar z+d\bar z\otimes dz)$ i.e. for a law for a random real valued field $\varphi$. To state the KPZ conjecture for this law we fix a conformal metric $g(z)$ ("background metric")  on  $\hat\C$ and then the  KPZ conjecture  \cite{KPZ, Da, DiKa} states that  the random measure  $\rho_{\mu,\gamma}$ is given by
\begin{align}
\rho_{\mu,\gamma}(dz)=e^{\gamma \phi_g(z)}
dz\label{rho}
\end{align}
where $ \phi_g$ is
 the {\it Liouville field}
\begin{align}\label{Liouville field}
\phi_g:=X+\frac{_Q}{^2}\ln g
\end{align}
and
$X$ is a random field whose law  is  formally given by 
\begin{align}
\E_{\gamma,\mu}\,  f(X)=Z^{-1}\int_{Map(\C\to\R)} f(X)\, e^{-S_L(X,{g})}DX.
\label{funci}
\end{align}
Here $S_L$ is action functional of the {\it Liouville model}: 
\begin{align}
S_L(X,{g}):= \frac{1}{\pi}
\int_{\C}\big(\partial_z X\partial_{\bar z} X+\frac{_Q}{^4}gR_{{g}} X  +\pi\mu e^{\gamma \phi_g  }\big)\,dz.
\label{actio}
\end{align}
 Here 
$Q$ is related to $\gamma$ by
$$Q=2/\gamma+\gamma/2.$$
Furthermore the heuristic integration over $X$ in \eqref{funci} is supposed to include "gauge fixing" due to the marked points $z_1,z_2,z_3$. 
{Our aim} is to give precise meaning to the law \eqref{funci} and study its properties that include 
{\it conformal invariance}.
\begin{remark}
Note that for  a conformally equivalent metric $g'=e^\varphi g$
$$
g'R_{g'}=gR_g-4\partial_{\bar z}\partial_z\varphi
$$
so that
\begin{align}
S_L(X,{e^\varphi g})=S_L(X+\frac{_Q}{^2}\varphi,{g})- \frac{Q^2}{4\pi}\int_{\C}\big(\partial_z \varphi\partial_{\bar z} \varphi+\hf gR_{{g}} \varphi)dz
\label{weyl0}
\end{align}
Thus, modulo an additive constant, the {\it Weyl transformation} $g\to e^\varphi g$ is a shift in $X$. 
\end{remark}

\begin{remark}
 $\hat\C$ has a nontrivial automorpism group $SL(2,\C)$ which acts as M\"obius transformations $\psi(z)=\frac{az+b}{cz+d}$. By change of variables one can compute
\begin{align*}
S_L(X\circ\psi^{-1},g)=S_L(X+\frac{_Q}{^2}\varphi,g)
\end{align*}
where $e^\varphi g=|\psi'|^2
{ g\circ\psi}$. 
\end{remark}

\subsection{Massless Free Field} Let us first keep only the quadratic term in the action functional \eqref{actio} and try to define the linear functional
\begin{align}
\langle F\rangle=\int_{Map(\C\to\R)} F(X)e^{-\frac{1}{\pi}
\int_{\C}|\partial_zX |^2dz}
DX
\label{gffexp}
\end{align}
We may define this in terms of the Gaussian Free Field (GFF). 


\subsubsection{GFF} In general  the $GFF$ is a Gaussian random field whose covariance is the Green function of the Laplacean. In our setup the Laplace operator is given by  $\Delta_g=4g(z)^{-1}\partial_{\bar z}\partial_{ z}$. Some care is needed here since $\Delta_g$ is not invertible. Indeed, $-\Delta_g$ is a non-negative self-adjoint operator on $L^2(\hat\C,g)$ (whose inner product we denote by $(f,h)_g=\int \bar fhgdz$). It has a point spectrum consisting of eigenvalues $\lambda_n$ and  orthonormal eigenvectors $e_n$ which we take so that $\lambda_n>0 $ except for $\lambda_0=0$ with $e_0=1/|1\|_g$. We define the GFF $X_g$ as the random distribution
\begin{align}
X_g(z)=\sqrt{2\pi}\sum_{n>0}\frac{x_n}{\sqrt{\lambda_n}}e_n(z)
\label{gff1}
\end{align}
where  $x_n$ are i.i.d. $N(0,1)$. The covariance  
$G_g(z,z'):=\E X_g(z) X_g(z')$ is easily computed: we have (for real $f$, $h$)
$$
\frac{1}{2\pi}\E( X_g,(-\Delta_g)f )_g(X_g,h)_g=(f,h)_g-(e_0,f)_g(e_0,h)_g
$$
which implies that 
\begin{align}\label{deltag}
-\Delta_g G_g(z,z')=2\pi(g(z)^{-1}\delta(z-z') -(\int g(w)dw)^{-1})
\end{align}
Since $(e_0,X_g)_g=0$  we have $\int G_g(z,z')g(z)dz=0=\int G_g(z,z')g(z')dz'$ and we end up with
\begin{align}\label{covg}
G_g(z,z')=\E X_g(z)X_g(z')=\ln|z-z'|^{-1} 
-c_g(z)-c_g(z')+C_g
\end{align}
where 
\begin{align}\label{covg1}
c_g(z)=m_g(\ln|z-\cdot|^{-1})=\ln|z|+\caO(1)=\frac{_1}{^4}\ln {g}(z)+\caO(1)
\end{align}
 and $C_g=(1,g)^{-2}\int \ln|u-v|^{-1}g(u)g(v)dudv$. We used the notation for the average in $g$
 $$
m_g(f):=\int_\C f(z)\, g(z)dz/\int_\C g(z)dz
$$
 For the round metric we have 
 \begin{align}\label{covg2}c_{\hat g}=\frac{_1}{^4}\ln {\hat g}-\hf\ln 2, 
\ \ \ C_{\hat g}=-\hf.\end{align}
One should think about the  $X_g$ as we vary $g$ as obtained from the same field $X$ by
 $X_g=X-m_g(X)$. Although there is no such $X$ this makes the following fact evident. If $g'$ is another conformal metric then
\begin{align}\label{change}
X_{g'}\stackrel{law}{=}X_{g}-m_{g'}(X_g).
\end{align}
Moreover
the GFF $X_g$ transforms simply under 
M\"obius trasformation $\psi$ of $\hat\C$:
$$
\E X_{g}(\psi(x))X_{g}(\psi(y))
=\E X_{g_\psi}(x)X_{g_\psi}(y) 
$$
where the transformed metric is
\begin{align}\label{gpsi}
g_\psi:=|\psi'|^2g\circ\psi.
\end{align}
Indeed \eqref{covg} may be written as
\begin{align}\label{covg2}
G_g(z,z')=(\int_\C g(v)dv )^{-2}\int_{\C^2}\ln\frac{|z-u||z'-v|}{|z-z'||u-v|}g(u)g(v)dudv.
\end{align}
A change of variables $u=\psi(u')$, $v=\psi(v')$ and invariance of cross ratios under M\"obius maps give the claim.
We may state this as
\begin{align}\label{Xpsi}
X_g\circ\psi\stackrel{law}{=}X_{g_\psi} .
\end{align}
The random field $X_g$ determines  probabilty measure $\P_g$ on $H^{-1}(\hat \C,g)$ through its  generating function 
\begin{align*}
\E\, e^{i(X_g, f)_g}=\int e^{i(X, f)_g}P_g(dX).
\end{align*} 
 We define the {\it Massless Free Field} as the Borel measure $\nu_{MFF}$ on  $H^{-1}(\hat \C,g)$ as the push-forward of the measure $\P_g\times dc$ on $H^{-1}(\hat \C,g)\times\R$ to $H^{-1}(\hat \C,g)$ under the map $(X,c)\to X+c$. Concretely
$$
\int F(X)\nu_{MFF}(dX)=\int_\R (\E\, F(X_g+c))dc
$$
 Note that 
 $\nu_{MFF}$ is {\it not} 
a probability measure: $\int  \nu_{MFF}(dX)=\infty$. Using \eqref{change} we see that this measure  is {\it independent of the chosen metric} in the conformal class of $\hat g$ since  the  random constant
$m_{g'}(X_g)$ can be absorbed to a shift in $c$.

We can now give a tentative definition of the measure in \eqref{funci} by defining
\begin{align}
\nu_{g}(dX)=e^{- \frac{1}{\pi}\int_{\C}(\frac{_Q}{^4}gR_{g} X  + \pi\mu e^{\gamma \phi_g  }\,)dz}\nu_{MFF}(dX) .\label{tenta}
\end{align}
However, now we encounter the problem of renormalization as $ e^{\gamma X_g  }$ is not defined since $X_g$ is not defined pointwise.

\subsection{Multiplicative Chaos} To define $ e^{\gamma X_g  }$ we proceed  by taking a mollified  version of GFF   
 \begin{equation}\label{molli}
X_{g,\epsilon}:=\rho_\epsilon\ast X_g
\end{equation} 
 where $\rho_\epsilon(z)=\epsilon^{-2}\rho(z/\epsilon)$ and $\rho$ is a smooth rotation invariant mollifier.
   We have from \eqref{covg}
 \begin{equation}\label{circlegreen}
\E X_{g,\epsilon} (z)^2=\ln \epsilon^{-1}+a(\rho)-2c_g(z)+C_g+o(1)
\end{equation} 
uniformly on $\C$ where the constant $a(\rho)=
\int \rho(z)\rho'(z)\ln|z-z'|^{-1}dzdz'$ depends on the regularization function $\rho$. 
Hence
 \begin{equation}\label{circlegreen1}
\epsilon^{\frac{_{\gamma^2}}{^2}} \E e^{\gamma X_{{g},\epsilon}(z)}=\epsilon^{\frac{_{\gamma^2}}{^2}} e^{\frac{\gamma^2}{2}\E X_{g,\epsilon} (z)^2}=(A+o(1))e^{-\gamma^2c_g(z)}
\end{equation} 
for a constant $A$. Hence it is natural to renormalize by defining the  random measure on $\C$
\begin{align}
M_{g,\gamma,\epsilon}(dz):=\epsilon^{\frac{_{\gamma^2}}{^2}}e^{\gamma (X_{{g},\epsilon}(z)+\frac{_{ Q}}{^2}\ln g(z))}
dz
\label{Mdefi}
\end{align}
In particular for the round metric we get
\begin{align}
M_{\hat g,\gamma,\epsilon}(dz)=(a+o(1))e^{\gamma X_{{\hat g},\epsilon}-\frac{\gamma^2}{2} \E X_{{\hat g},\epsilon}^2 }\hat g(z)dz
\label{Mdefiround}
\end{align}
with $a=e^{\hf\gamma^2(\ln 2-\hf)}
$.
\begin{proposition}\label{basic}
$$
M_{g,\gamma, \epsilon}\to M_{ g,\gamma}  
$$
 weakly in probability as $\epsilon\to 0$. The limit is independent of the mollifier and nonzero if and only if $\gamma<2$. It satisfies
 \vskip 2mm
 \noindent (a)  Let $B_r$ a ball of radius $r$ and $p>0$. Then $\E  M_{ g,\gamma} (B_r)^p<\infty$ if and only if $p<4/\gamma^2$ and then 
 \begin{align}\E  M_{ g,\gamma} (B_r)^p\leq Cr^{\xi(p)}
 \label{Moment1}
\end{align}
where $\xi(p)=\gamma Qp-\hf\gamma^2p^2$.

 \vskip 2mm
 \noindent (b) $\E  M_{ g,\gamma} (B_r)^p<\infty $ for all $p<0$.
 
 \end{proposition}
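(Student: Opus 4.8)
\emph{Overview and construction.} The plan is to realise $M_{g,\gamma}$ as the almost sure limit of a nonnegative martingale (standard Gaussian multiplicative chaos machinery), to prove the sharp moment bounds as the quantitative backbone, and then to read off non-triviality, mollifier independence and parts (a)--(b) from them. Concretely, approximate $X_g$ by a Markovian cutoff $X_{g,t}$ (for instance circle averages, whose covariances increase to $G_g$ of \eqref{covg} and which form a martingale in $t=\ln\frac1\epsilon$ up to a bounded deterministic correction coming from the curvature/boundary terms $c_g,C_g$ of \eqref{covg1}), and set $M_t(dz):=e^{\gamma X_{g,t}(z)-\frac{\gamma^2}{2}\E X_{g,t}(z)^2}g(z)^{\gamma Q/2}dz$, normalised as in \eqref{Mdefi}. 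For each bounded Borel $A$, $(M_t(A))_t$ is (up to the said deterministic correction) a nonnegative martingale, hence converges a.s.; running this over a countable family of balls and using local tightness of $\{M_t\}$ yields an a.s. limiting measure $M=M_{g,\gamma}$; the identification with the limit of the convolution approximation \eqref{Mdefi} and mollifier independence are addressed in the last paragraph.

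\emph{Moment bounds --- the crux.} For $0<p<4/\gamma^2$ I would prove $\sup_t\E M_t(B_r)^p\le Cr^{\xi(p)}$, $\xi(p)=\gamma Qp-\frac{\gamma^2}{2}p^2$, and for $p\ge4/\gamma^2$ that $\E M_t(B_r)^p$ diverges as $t\to\infty$. The bridge to explicitly solvable models is Kahane's convexity inequality: since $G_g(z,z')=\ln|z-z'|^{-1}+O(1)$ on bounded sets, the law of $M_t(A)$ is comparable --- in the convex order, up to bounded multiplicative constants --- both to that of an exactly scale invariant log-correlated chaos on a disc and to that of a dyadic multiplicative cascade. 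For the scale invariant model the self-similarity $M(B_r)\stackrel{law}{=}r^{\gamma Q}e^{\gamma\Omega_r}\widetilde M(B_1)$, with $\Omega_r\sim N(0,\ln\frac1r)$ independent of a copy $\widetilde M$, gives $\E M(B_r)^p=r^{\xi(p)}\E M(B_1)^p$ once $m_p:=\E M(B_1)^p<\infty$; and $m_p<\infty$ for $p<4/\gamma^2$ follows from a bootstrap: splitting the unit square into four sub-squares and applying the cascade scaling in each block gives a self-consistent inequality $m_p\le 2^{\,2-\xi(p)}\,m_p+(\text{finite cross terms in strictly lower moments})$, whose coefficient $2^{\,2-\xi(p)}$ is $<1$ exactly when $p\in(1,4/\gamma^2)$ (the range $p\le1$ being immediate from subadditivity of $x\mapsto x^p$); one runs this from $p\le1$ upward, bounding cross terms by Hölder and the already-controlled lower moments, and works with the cutoff $M_t$ to keep everything a priori finite before $t\to\infty$. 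Conversely $\E M(B_1)^p=\infty$ for $p\ge4/\gamma^2$: either by the classical cascade statement together with the lower Kahane bound (a cascade of sufficiently fine top scale sits below $G_g$), or directly, by partitioning $B_1$ into $\asymp e^{2t}$ balls of radius $e^{-t}$ and lower-bounding the contribution to $\E M_t(B_1)^p$ of the event that one such ball is $p\gamma$-thick for $X_{g,t}$, a contribution of order $e^{(2-\xi(p))t}$ which blows up precisely when $\xi(p)<2$, i.e. $p>4/\gamma^2$ (the borderline $p=4/\gamma^2$ needs an extra logarithmic refinement). \textbf{This step --- the upward bootstrap and the matching divergence --- is the main obstacle.}

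\emph{Non-triviality.} If $\gamma<2$ then $4/\gamma^2>1$; fixing $p\in(1,4/\gamma^2)$, the previous step gives $\sup_t\E M_t(A)^p<\infty$, so $(M_t(A))_t$ is uniformly integrable, $\E M(A)=\lim_t\E M_t(A)>0$, and hence $M_{g,\gamma}\not\equiv0$. If $\gamma\ge2$ the martingale is not uniformly integrable and $M_t(A)\to0$ a.s.: by comparison (Kahane) with a supercritical multiplicative cascade, whose martingale limit vanishes for $\gamma\ge2$; equivalently, a Girsanov (Cameron--Martin) shift shows the mass of $M_t$ concentrates on the $\gamma$-thick points of $X_{g,t}$, whose limiting set is empty since its Hausdorff dimension $2-\frac{\gamma^2}{2}$ is $\le 0$ for $\gamma\ge2$. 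This establishes the last sentence of the first part of the statement.

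\emph{Mollifier independence and parts (a)--(b).} Two admissible cutoffs (or the Markov approximation versus the convolution \eqref{Mdefi}) produce covariances of the regularised exponential fields differing by $o(1)$ uniformly, together with a deterministic renormalising factor that converges; Kahane's interpolation between the two kernel families gives equality of the limiting laws against convex functionals of $M_{g,\gamma}(f)$, and an elementary estimate --- $L^2$ for $\gamma<\sqrt2$ (using $\E[M_{g,\gamma,\epsilon}(f)M_{g,\gamma,\epsilon'}(f)]=\int\!\int e^{\gamma^2G_g(z,z')+o(1)}f(z)f(z')\,dz\,dz'$, which is integrable there), and the martingale truncation for $\sqrt2\le\gamma<2$ --- upgrades this to convergence in probability of $M_{g,\gamma,\epsilon}(f)$ to the common limit. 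Part (a) is then the moment-bounds step transported to $M_{g,\gamma}$: Fatou along the martingale gives finiteness for $p<4/\gamma^2$ and the estimate \eqref{Moment1}, while the cascade lower bound gives $\E M_{g,\gamma}(B_r)^p=\infty$ for $p\ge4/\gamma^2$ and every $r>0$. Part (b): for every $q>0$, $M_{g,\gamma}(B_r)>0$ a.s. with $\P(M_{g,\gamma}(B_r)<\delta)$ decaying faster than any power of $\delta$ as $\delta\to0$ --- comparing from below with a dyadic cascade exhibits $M_{g,\gamma}(B_r)$ as dominating a sum of many independent strictly positive contributions --- whence $\E M_{g,\gamma}(B_r)^{-q}<\infty$.
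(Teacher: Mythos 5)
The paper does not actually prove Proposition \ref{basic}: it states the result and delegates the proof to the Gaussian multiplicative chaos literature (the review \cite{DRSVreview}, and Kahane's original work). Your outline is a correct reconstruction of the standard arguments found there --- martingale approximation, Kahane convexity against exactly scale-invariant chaos and cascades, the $\xi(p)>2\Leftrightarrow p\in(1,4/\gamma^2)$ bootstrap for positive moments and the matching divergence, uniform integrability for non-degeneracy when $\gamma<2$ and thick-point/cascade degeneracy for $\gamma\geq 2$, Kahane interpolation for mollifier universality, and the super-polynomial lower-tail estimate for negative moments --- so there is nothing to compare it against within the paper itself; it is consistent with, and more detailed than, what the paper offers.
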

 The limit is  an example of {\it Gaussian multiplicative chaos} (see \cite{DRSVreview} for a review), a random
{\it multifractal} measure on $\C$. We will use the notation $M_{ g,\gamma}(f)=\int f(z)M_{g,\gamma}(dz)$.
From \eqref{Mdefi} we have $\E M_{ g,\gamma}(1)=\int_\C B(z)g(z)dz$ where $B$ is bounded. Hence  $M_{ g,\gamma}(\C)<\infty$ a.s. We may now define
\eqref{tenta} as
\begin{align}
\nu_g=e^{- \frac{1}{\pi}\int_{\C}\frac{_Q}{^4}R_{g} X gdz - \mu e^{\gamma c  }M_{ g,\gamma}(1)}\nu_{MFF} .\label{tenta}
\end{align}
We will often use the notations 
$$e^{\gamma(X_g+\frac{_Q}{^2}\ln g)}dz:=M_{ g,\gamma}(dz)\ \ \ e^{\gamma\phi_g}dz:=e^{\gamma c}M_{ g,\gamma}(dz)
$$ 
but the reader should be aware that  $M_{ g,\gamma}$ is {\it not} absolutely continuous w.r.t. the Lebesque measure.

The chaos measure has a nice transformation law under conformal maps:
\begin{proposition}\label{mobiusforchaos} Let $\psi$ be a M\"obius map of $\hat C$. Then
\begin{align*}
\int fe^{\gamma X_{ g}}dz=\int f\circ\psi \, e^{\gamma X_{ g}\circ\psi}|\psi'|^{2+\frac{\gamma^2}{2}}dz
\end{align*}

\end{proposition}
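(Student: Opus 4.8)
The plan is to prove the identity at the level of the mollified fields and then pass to the limit using Proposition~\ref{basic}. I interpret $X_g\circ\psi$ as the pull-back distribution (well defined since $\psi$ is smooth), and the measure $e^{\gamma X_g\circ\psi}\,dz$ as the weak limit of $\epsilon^{\gamma^2/2}e^{\gamma(\rho_\epsilon\ast(X_g\circ\psi))(z)}\,dz$; by \eqref{Mdefi} the measure $e^{\gamma X_g}\,dz$ is the weak limit of $\epsilon^{\gamma^2/2}e^{\gamma X_{g,\epsilon}(z)}\,dz=g^{-\gamma Q/2}M_{g,\gamma,\epsilon}(dz)$, which converges by Proposition~\ref{basic}. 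The key deterministic input is that a conformal map is a similarity on small scales: since $\rho$ is radial, a change of variables shows that for $w$ in an $O(\epsilon)$-neighbourhood of $z$,
$$\rho_\epsilon(z-w)=|\psi'(w)|^2\,\rho_{\epsilon|\psi'(z)|}\!\big(\psi(z)-\psi(w)\big)\,\big(1+o(1)\big),$$
and hence, convolving against $X_g\circ\psi$ and substituting $v=\psi(w)$,
$$\big(\rho_\epsilon\ast(X_g\circ\psi)\big)(z)=X_{g,\,\epsilon|\psi'(z)|}\!\big(\psi(z)\big)+(\mathrm{error}).$$

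Granting that the error is asymptotically negligible, I would insert this into the regularized measure and change variables $w=\psi(z)$, whose Jacobian is $|\psi'|^{-2}$:
\begin{align*}
&\int (f\circ\psi)\,|\psi'|^{2+\frac{\gamma^2}{2}}\,\epsilon^{\frac{\gamma^2}{2}}e^{\gamma(\rho_\epsilon\ast(X_g\circ\psi))(z)}\,dz \\
&\qquad =\big(1+o(1)\big)\int f(w)\,\big(\epsilon\,|\psi'(\psi^{-1}(w))|\big)^{\frac{\gamma^2}{2}}e^{\gamma X_{g,\,\epsilon|\psi'(\psi^{-1}(w))|}(w)}\,dw,
\end{align*}
where two powers of $|\psi'|$ have been absorbed by the Jacobian and the remaining $|\psi'|^{\gamma^2/2}$ has been combined with $\epsilon^{\gamma^2/2}$. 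The integrand on the right is precisely the object whose limit defines $e^{\gamma X_g}\,dw$, only regularized at the position-dependent scale $\epsilon\,|\psi'(\psi^{-1}(w))|$; since this scale is bounded, bounded away from $0$, and tends to $0$ with $\epsilon$, the mollifier-independence asserted in Proposition~\ref{basic} forces the right side to converge to $\int f\,e^{\gamma X_g}\,dw$. Letting $\epsilon\to0$ on the left then identifies $\int (f\circ\psi)\,e^{\gamma X_g\circ\psi}\,|\psi'|^{2+\gamma^2/2}\,dz$ (and incidentally establishes that this limit exists) with $\int f\,e^{\gamma X_g}\,dz$, which is the assertion.

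The main obstacle will be controlling the error term, and two issues arise. First, one must show that the difference $(\rho_\epsilon\ast(X_g\circ\psi))(z)-X_{g,\epsilon|\psi'(z)|}(\psi(z))$, which stems from linearising $\psi$ on the $\epsilon$-scale, is small enough that the two Wick exponentials converge together; the clean route is to bound its second moment uniformly on compact sets and then argue by uniform integrability exactly as in the proof of Proposition~\ref{basic}, or, equivalently, to regularize $X_g\circ\psi$ from the start with the $\psi$-adapted kernel $w\mapsto|\psi'(w)|^2\rho_{\epsilon|\psi'(z)|}(\psi(z)-\psi(w))$ and invoke that Proposition~\ref{basic} permits this non-standard mollifier. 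Second, for a M\"obius map $\psi(z)=\frac{az+b}{cz+d}$ the factor $|\psi'|$ blows up at the pole $z=-d/c$ (and $|\psi'|\circ\psi^{-1}$ at $z=a/c$), and the background metric degenerates at $\infty$, so the local-similarity estimate is uniform only on compact subsets of $\C$ away from these finitely many points. One therefore localizes, showing that uniformly in small $\epsilon$ the measures in question give negligible mass to shrinking neighbourhoods of the exceptional points; this follows from the positive- and negative-moment estimates of Proposition~\ref{basic}(a),(b) — in particular the exponent $\xi(p)$ — used just as they are used to control integrability near the insertion points in the rest of the construction. With these two ingredients the passage to the limit is justified.
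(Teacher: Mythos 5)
Your proposal is correct and follows essentially the same route as the paper: after a change of variables, one identifies the $\epsilon$-mollification of the composed field with the original field mollified at the position-dependent scale $\epsilon|\psi'|$, and then invokes the insensitivity of the chaos limit to such changes of regularization (the paper phrases this as convergence of the covariances uniformly on compacts of $\C\setminus\{\psi^{-1}(\infty)\}$ plus uniqueness of the chaos measure, exactly the two error issues you isolate). Your write-up merely makes the error control and the localization away from the pole of $\psi'$ more explicit than the paper does.
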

\begin{proof}
Making a change of variables  we get
\begin{align}\label{mob}
\int f e^{\gamma X_{ g}}dz&=\lim_{\epsilon\to 0}\int  f \epsilon^{\frac{\gamma^2}{2}}e^{\gamma X_{ g,\epsilon}}dz
=\lim_{\epsilon\to 0}
\int f\circ\psi \epsilon^{\frac{\gamma^2}{2}}e^{\gamma X_{ g,\epsilon}\circ\psi}|\psi'|^2dz. 
\end{align}
Suppose first $\psi$ is the scaling $\psi(z)=\lambda z$. Then 
$$X_{ g,\epsilon}(\psi(z))=\epsilon^{-2}\int \rho(|\lambda z-u|/\epsilon)X_g(u)du=(X_g\circ\psi)_{ \epsilon/|\lambda|}(z)$$
 and the claim follows by setting $\epsilon'=\epsilon/|\lambda|$. For the general case one notes that
\begin{align*}
\lim_{\epsilon\to 0}((\E X_{ g,\epsilon}(\psi(z)) X_{ g,\epsilon}(\psi(u))-\E( X_g\circ\psi)_{ \epsilon/|\psi'(z)|}(z)(X_g\circ\psi)_{ \epsilon/|\psi'(u)|}(u))=0
\end{align*}
uniformly on compacts in $\C\setminus \{\psi^{-1}(\infty)\}$ and invokes uniqueness of the chaos measure under such condition. 
\end{proof}
Note that by \eqref{Xpsi}  we get in partcular
\begin{align}\label{invar}
\int_\C e^{\gamma\phi_g}dz\stackrel{law}=\int_\C e^{\gamma\phi_{g_\psi}}dz.
\end{align}

\subsection{Weyl and M\"obius invariance} We saw that $X$ is metric independent under $\nu_{MFF}$. For the { Liouville field} we have (compare with \eqref{weyl0})

\begin{proposition}\label{Weyl}
Let $F\in L^1(\nu_g)$ and 
$g'=e^{\varphi}g$. Then
\begin{align*}
\int F(\phi_{g'})d\nu_{g'}=e^{\frac{c_L-1}{24\pi}\int (|\partial_z \varphi|^2+ \hf gR_{{g}}\varphi )dz}\int F(\phi_g)d\nu_{g}
\end{align*}
where $c_L=1+6Q^2$.
\end{proposition}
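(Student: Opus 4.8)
The plan is to unfold the definition \eqref{tenta} of $\nu_{g'}$ and track how each factor transforms under $g'=e^\varphi g$, using the Weyl relation \eqref{weyl0} for the classical action together with the probabilistic identities \eqref{change} and the multiplicative-chaos scaling. First I would recall from \eqref{change} that $X_{g'}\stackrel{law}{=}X_g-m_{g'}(X_g)$, so under $\nu_{MFF}$ (which by construction is metric-independent in the conformal class) the field $X_{g'}+c$ has the same law as $X_g+c'$ with $c'=c-m_{g'}(X_g)$; since $\nu_{MFF}$ is a product of $\P_g$ with Lebesgue $dc$, and the shift in $c$ has unit Jacobian, integrating against $\nu_{MFF}$ is insensitive to this. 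The substantive content is therefore to compare the three weights appearing in \eqref{tenta}: the curvature term $\exp(-\tfrac1\pi\int \tfrac Q4 R_g X g\,dz)$, the cosmological term $\exp(-\mu e^{\gamma c}M_{g,\gamma}(1))$, and the change of the Liouville field $\phi_{g'}=\phi_g+\tfrac Q2\varphi$ inside $F$.

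The key steps, in order: (1) Use $g'R_{g'}=gR_g-4\partial_{\bar z}\partial_z\varphi$ to rewrite the curvature term for $g'$; an integration by parts moves one derivative onto $X$, producing $-\tfrac{Q}{4\pi}\int R_g X g\,dz$ plus a cross term $-\tfrac{Q}{\pi}\int \partial_z\varphi\,\partial_{\bar z}X\,dz$-type contribution, the latter being linear in $X$. (2) For the chaos measure, I would establish the scaling identity $M_{g',\gamma}(dz)=e^{\gamma(\tfrac Q2\varphi)}M_{g,\gamma}(dz)$ — equivalently $e^{\gamma\phi_{g'}}dz=e^{\gamma\phi_g}dz$ up to the deterministic factor coming from $\tfrac Q2\ln(g'/g)=\tfrac Q2\varphi$ — so that the cosmological term for $g'$ becomes $\exp(-\mu e^{\gamma c}\int e^{\gamma(\phi_g+\tfrac Q2\varphi)}\,\ldots)$; combined with the shift $\phi_{g'}=\phi_g+\tfrac Q2\varphi$ in $F$, this shows the substitution $X\mapsto X+\tfrac Q2\varphi$ converts the entire $g'$-integrand (cosmological term $+$ argument of $F$) into the $g$-integrand. (3) Then absorb that shift $X\mapsto X+\tfrac Q2\varphi$ via the Cameron–Martin theorem for the GFF $X_g$: the Radon–Nikodym derivative of the law of $X_g+\tfrac Q2\varphi$ with respect to that of $X_g$ is $\exp\big(\tfrac Q2(X_g,\varphi)_{\text{CM}}-\tfrac{Q^2}{8}\|\varphi\|^2_{\text{CM}}\big)$, where the Cameron–Martin inner product is the Dirichlet form $\tfrac1\pi\int|\partial_z\varphi|^2$; the linear-in-$X$ piece here must cancel against the linear-in-$X$ cross term produced in step (1), and the quadratic deterministic piece $-\tfrac{Q^2}{8\pi}\int|\partial_z\varphi|^2$ should combine with the curvature cross term's deterministic remainder to yield exactly the exponent in \eqref{weyl0}, namely $\tfrac{Q^2}{4\pi}\int(|\partial_z\varphi|^2+\hf gR_g\varphi)\,dz$. (4) Finally match constants: $c_L-1=6Q^2$, so $\tfrac{c_L-1}{24\pi}=\tfrac{Q^2}{4\pi}$, which should be precisely the prefactor assembled in step (3). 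A subtlety to flag: the $\epsilon^{\gamma^2/2}$ renormalization in $M_{g,\gamma,\epsilon}$ interacts with the Weyl factor $e^{\gamma\tfrac Q2\varphi}$ through the position-dependent variance \eqref{circlegreen}, so step (2) really needs to be checked at the regularized level $M_{g',\gamma,\epsilon}$ and then pass to the limit using uniqueness from Proposition \ref{basic}, rather than manipulated formally.

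The main obstacle I anticipate is the bookkeeping in step (3): making the Cameron–Martin shift rigorous requires $\varphi$ to lie in the Cameron–Martin space $H^1$ of the GFF (true for smooth conformal $\varphi$ on $\hat\C$, but worth stating), and one must carefully verify that the \emph{linear} functional of $X$ arising from the density $\exp(\tfrac Q2(X_g,\varphi)_{CM}-\ldots)$ coincides with the linear-in-$X$ term $-\tfrac Q{\pi}\int\partial_z\varphi\,\partial_{\bar z}X\,dz$ generated by integrating the curvature term by parts — i.e. that $(X_g,\varphi)_{CM}=-\tfrac1\pi\int R_g X_g\, \cdot$ is not the right pairing but rather the Dirichlet pairing $\tfrac1\pi\int(\partial_z X_g\partial_{\bar z}\varphi+\partial_{\bar z}X_g\partial_z\varphi)$, and an integration by parts using $\Delta_g$ reconciles the two. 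Getting every factor of $2$, $\pi$, and sign correct here — and confirming that the stray deterministic constants (e.g. from $C_g$, $c_g$, and the $\epsilon$-independent constant $A$ in \eqref{circlegreen1}) all cancel because $\nu_{MFF}$ integrates $c$ over all of $\R$ — is the delicate part; the conceptual structure, by contrast, is exactly the classical computation \eqref{weyl0} promoted to the probabilistic setting.
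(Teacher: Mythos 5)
Your proposal is correct and follows essentially the same route as the paper: metric independence of $\nu_{MFF}$ (absorbing the random constant into $c$), the identity $g'R_{g'}=gR_g-\Delta\varphi$ together with Gauss--Bonnet for the $c$-part, and then a Girsanov/Cameron--Martin shift removing the $\frac{Q}{2}\varphi$ from $F$ and the chaos term, whose deterministic remainder is the anomaly $\frac{Q^2}{4\pi}\int(|\partial_z\varphi|^2+\hf gR_g\varphi)\,dz$. The constant-chasing you flag is settled in the paper by shifting by the mean-zero part $\frac{Q}{2}(\varphi-m_g\varphi)$ (with $c\to c-\frac{Q}{2}m_g\varphi$), using $G_g\Delta_g\varphi=-2\pi(\varphi-m_g\varphi)$ and $\E\big((X_g,\Delta_g\varphi)_g\big)^2=8\pi\int|\partial_z\varphi|^2dz$.
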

\begin{proof}
By  metric independence of $X$ we replace $c+X_{g'}$ by $c+X_{g}$ so that
\begin{align*}
\int F(\phi_{g'})d\nu_{g'}=\int F(\phi_{g}+\frac{_Q}{^2} \varphi)e^{- \frac{1}{\pi}\int( \frac{_Q}{^4}R_{g'} g'(c+X_g)+\pi\mu e^{\gamma(\phi_g+\frac{_Q}{^2} \varphi)})dz}
d\nu_{MFF} .
\end{align*}
Use $ R_{g'} g'=-\Delta \ln g'=R_{g} g-\Delta\varphi$ and  Gauss-Bonnet theorem $\int R_{g'} g'=8\pi=\int R_{g} g$ to get
\begin{align*}
\int R_{g'} g'(c+X_g)dz=\int R_{g} g(c+X_g)dz-\int \Delta\varphi X_gdz.
\end{align*}
Hence
\begin{align*}
\int F(\phi_{g'})d\nu_{g'}=\int F(\phi_{g}+\frac{_Q}{^2} \varphi)e^{- \frac{1}{\pi}\int( \frac{_Q}{^4}R_{g} g(c+X_g)+\pi\mu e^{\gamma(\phi_g+\frac{_Q}{^2} \varphi)})dz}
e^{ \frac{Q}{4\pi}
(X_g,\Delta_g\varphi)_g} d\nu_{MFF} .
\end{align*}
The result then follows by  a shift in the Gaussian integral $X_g\to X_g-\frac{_Q}{^2}( \varphi-m_g\varphi)$ and $c$ to $c-m_g\varphi$. Indeed,  by 
Girsanov theorem 
the law of $X_g- \frac{Q}{4\pi}G_g\Delta_g\varphi$ under the measure
$$e^{ \frac{Q}{4\pi}(X_g,\Delta_g\varphi)_g- \frac{Q^2}{32\pi^2}
\E((X_g,\Delta_g\varphi)_g)^2}\P_g
$$ 
equals the law of  $X_g$ under $\P_g$. We use here the notation $(G_gf)(z):=\int G_g(z,w)f(w)g(w)dw$. Since from \eqref{deltag} 
$\Delta_x G_g(x,y)=-2\pi\delta(x-y)+2\pi g(x)
$ we have
$G_g\Delta_g\varphi=-2\pi(\varphi-m_g\varphi)$
 so that
$$
X_g- \frac{Q}{4\pi}G_g\Delta_g\varphi=X_g+\frac{_Q}{^2}( \varphi-m_g\varphi)
$$
and we end up with
\begin{align*}
\int F(\phi_{g'})d\nu_{g'}=
e^{A(\varphi,g)}\int F(\phi_{g})d\nu_{g}
\end{align*}
where
$$
A(\varphi,g)= \frac{Q^2}{32\pi^2}\E(X_g,\Delta_g\varphi)_g^2+ \frac{Q^2}{8\pi}\int R_{g} g\varphi
$$
The claim follows since $\E(X_g,\Delta_g\varphi)_g^2=(\varphi,\Delta_g G_g\Delta_g \varphi)_g =-2\pi(\varphi,\Delta_g \varphi)_g =8\pi\int|\partial_z\varphi|^2dz$.\end{proof}
 
\begin{remark}
\noindent The multiplicative factor is called the {\it Weyl anomaly} in physics literature and $c_L$ is the  {\it central charge} of Liouville theory. Usually the Weyl transformation law in CFT has the factor $c_L$ and not  $c_L-1$. The reason for this discrepancy is that we have used the {\it normalized} law $\P_g$ for the GFF instead of the unnormalized one as in \eqref{gffexp}. To get the unnormalized law one needs to multiply by the partition function $Z_g$ of the GFF $X_g$ which formally is given by
$$
\det(-\Delta^\perp_g)^{-\hf}=\prod_{n>0}\lambda_n^{-\hf}
$$ While this is not defined 
its variation under Weyl transformation can be defined and the upshot is that $c_L-1$ gets replaced by $c_L$.
\end{remark}
As a consequence of the Proposition we get M\"obius transformation rule:
\begin{corollary}\label{mobius} Let $\psi$ be a M\"obius map of $\hat \C$. Then
\begin{align*}
\int F(\phi_{ g}\circ\psi)d\nu_{ g}=\int F(\phi_{ g}-Q\ln|\psi'|)d\nu_{ g}
\end{align*}
\end{corollary}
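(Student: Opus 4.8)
\emph{Proof proposal.} The plan is to read off the Corollary from Proposition \ref{Weyl} by combining it with the M\"obius covariance of the two building blocks of $\nu_g$: \eqref{Xpsi} for the Gaussian field and Proposition \ref{mobiusforchaos} for the chaos. Write $g_\psi:=|\psi'|^2\,g\circ\psi$ as in \eqref{gpsi}; since $\psi$ is M\"obius, $\ln|\psi'|^2$ is harmonic off the pole of $\psi$, so $g_\psi$ is again a smooth conformal metric in the conformal class of $g$, say $g_\psi=e^{\varphi}g$ with $\varphi=\ln(g_\psi/g)$.

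The first step is the ``unanomalous'' change of variables
\begin{align*}
\int F(\phi_g\circ\psi+Q\ln|\psi'|)\,d\nu_g=\int F(\phi_{g_\psi})\,d\nu_{g_\psi}.
\end{align*}
Unpacking $\nu_g$ via \eqref{tenta} as an integral over the constant mode $c$ and an expectation over $X_g$ with weight $\exp\!\big(-\tfrac1\pi\int\tfrac Q4 R_g g\,(X_g+c)\,dz-\mu e^{\gamma c}M_{g,\gamma}(1)\big)$, one checks the following pathwise identities under $z\mapsto\psi(z)$: that $\phi_g\circ\psi+Q\ln|\psi'|=X_g\circ\psi+c+\tfrac Q2\ln g_\psi$, because $(\ln g)\circ\psi=\ln g_\psi-\ln|\psi'|^2$; that $\int R_g g\,(X_g+c)\,dz=\int R_{g_\psi}g_\psi\,(X_g\circ\psi+c)\,dz$, using $R_{g_\psi}g_\psi=|\psi'|^2 (R_g g)\circ\psi$ together with Gauss--Bonnet $\int R_g g\,dz=8\pi=\int R_{g_\psi}g_\psi\,dz$; and --- the essential point --- that $M_{g,\gamma}(1)=\int g_\psi^{\gamma Q/2}\,e^{\gamma X_g\circ\psi}\,dz$, which is Proposition \ref{mobiusforchaos} with $f=g^{\gamma Q/2}$, the exponent $2+\tfrac{\gamma^2}2=\gamma Q$ being exactly what the Jacobian $|\psi'|$ needs. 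After these substitutions the integrand depends on $X_g$ only through the random distribution $X_g\circ\psi$, and \eqref{Xpsi} lets us replace $X_g\circ\psi$ by $X_{g_\psi}$ in law, turning the right side into $\int F(\phi_{g_\psi})\,d\nu_{g_\psi}$.

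The second step is Proposition \ref{Weyl} applied to $g$ and $g_\psi=e^{\varphi}g$:
\begin{align*}
\int F(\phi_{g_\psi})\,d\nu_{g_\psi}=e^{\frac{c_L-1}{24\pi}\int_{\C}(|\partial_z\varphi|^2+\frac12 gR_g\varphi)\,dz}\int F(\phi_g)\,d\nu_g .
\end{align*}
Thus the Corollary is equivalent to the vanishing of the anomaly $I_\psi:=\int_{\C}(|\partial_z\varphi|^2+\tfrac12 gR_g\varphi)\,dz$ attached to the conformal factor of a M\"obius map: granting $I_\psi=0$, the two displays combine to $\int F(\phi_g\circ\psi+Q\ln|\psi'|)\,d\nu_g=\int F(\phi_g)\,d\nu_g$ for every functional $F$, and replacing $F$ by $h\mapsto F(h-Q\ln|\psi'|)$ yields precisely Corollary \ref{mobius}. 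To see $I_\psi=0$ one may first reduce to $g=\hat g$: applying Proposition \ref{Weyl} to the functionals $h\mapsto F(h\circ\psi)$ and $h\mapsto F(h-Q\ln|\psi'|)$ and to $g=e^{\sigma}\hat g$ shows the Corollary for a general $g$ follows from the Corollary for $\hat g$, the common Weyl factor cancelling on the two sides. For $g=\hat g$ the metric $\hat g_\psi=\psi^{\ast}\hat g$ is isometric to $\hat g$ (via $\psi$ itself), so $I_\psi$ is, up to the universal constant, the Polyakov anomaly between two isometric metrics and therefore vanishes; alternatively one verifies this directly along the generators of $SL(2,\C)$ --- rotations and the inversion fix $\hat g$ and are trivial, while for a dilation $\psi(z)=\lambda z$ one checks that $\int|\partial_z\varphi|^2\,dz$ and $\int\tfrac12\hat g R_{\hat g}\varphi\,dz$, with $\varphi=2\ln\lambda+2\ln(1+|z|^2)-2\ln(1+\lambda^2|z|^2)$, are negatives of one another --- together with additivity of the anomaly under composition.

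I expect the genuine difficulty to lie in this last point. Steps one and two are bookkeeping forced by \eqref{Xpsi} and Proposition \ref{mobiusforchaos}; the substantive content is that the Weyl anomaly vanishes along M\"obius directions, i.e.\ that the conformal Killing transformations of $\hat\C$ generate no anomaly --- this is exactly what upgrades the Weyl \emph{covariance} of Proposition \ref{Weyl} to the anomaly-\emph{free} M\"obius invariance of Corollary \ref{mobius}. A secondary care point: in the dilation computation the integrations by parts relating $\int|\partial_z\varphi|^2$, $\int gR_g\varphi$ and $\int\varphi\,\partial_z\partial_{\bar z}\ln g$ carry boundary terms at infinity because $\ln\hat g$ is unbounded there, so these integrals should be evaluated directly rather than manipulated formally.
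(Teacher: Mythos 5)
Your proposal follows the paper's proof essentially step for step: the change of variables turning $\int F(\phi_g\circ\psi)\,d\nu_g$ into $\int F(\phi_{g_\psi}-Q\ln|\psi'|)\,d\nu_{g_\psi}$ via Proposition \ref{mobiusforchaos}, \eqref{Xpsi} and the harmonicity of $\ln|\psi'|$ in the curvature term, followed by Proposition \ref{Weyl} and the vanishing of the anomaly $A(\varphi,g)$ for $\varphi=\ln|\psi'|^2+\ln g\circ\psi-\ln g$. The only divergence is in that final step, where the paper reduces to $A(\ln g\circ\psi-\ln g,g)$ and dismisses the rest as ``some algebra,'' whereas you reduce to $\hat g$ and check the generators of $SL(2,\C)$ explicitly; your dilation computation is correct (the substitution $z\mapsto 1/(\lambda \bar z)$ gives $\int\varphi\hat g\,dz=-\int\varphi\hat g_\psi\,dz$, whence $\int|\partial_z\varphi|^2dz=-\int\hat g\varphi\,dz$), and together with your caution about boundary terms it is a more complete record of that algebra than the paper provides.
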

\begin{proof} From \eqref{mob} we get
\begin{align*}
\int e^{\gamma\phi_{ g}}dz=
\int e^{\gamma(c+X_{ g}\circ\psi+\frac{Q}{2}\ln  g_\psi)}dz
\end{align*}
For  the curvature term we get
\begin{align*}
\int gR_gX_g=-\int\Delta\ln g X_g=-\int\Delta(\ln g\circ\psi) X_g\circ\psi=-\int\Delta(\ln g_\psi) X_g\circ\psi=\int g_\psi R_{g_\psi}X_g\circ\psi
\end{align*}
where in the third step we used that $\ln|\psi'|$ is harmonic.
Recalling that
$
X_g\circ\psi\stackrel{law}{=}X_{g_\psi} 
$
and combining with   Proposition \ref{Weyl} we obtain
\begin{align*}
\int F(\phi_{g}\circ\psi)d\nu_{ g}=\int F(\phi_{ g_\psi}-Q\ln|\psi'|)d\nu_{g_\psi}=e^{A(\varphi,g)}\int F(\phi_{ g}-Q\ln|\psi'|)d\nu_{g}
\end{align*}
where $\varphi=\ln|\psi|^2+\ln g\circ\psi-\ln g$. Using the fact that $\ln|\psi|^2$ is harmonic we get
$A(\varphi,g)=A(\ln g\circ\psi-\ln g,g)$ and some algebra shows this vanishes.
 \end{proof}

\subsection{Vertex operators} Since the  M\"obius group is non-compact the Corollary makes one suspect that the measure $\nu_g$ does not have a  finite mass. Let us consider its Laplace transform. By Proposition \ref{Weyl} we may work with the round metric $\hat g$ where $R_{\hat g}=2$. Then  $\frac{1}{4\pi}\int R_{\hat{g}} (c+X_{\hat g})g dz=2c$  since $\int X_{\hat g}\hat g=0$. We get
\begin{align*}
\int e^{(X,f)_{\hat g} }d\nu_{\hat g}=\int\,e^{((1,f)_{\hat g} - 2Q)c}  \E_{\hat g}\, e^{ (X_{\hat g},f)_{\hat g}  } e^{-  \mu e^{\gamma c}\int_\C M_{\hat g,\gamma}{(dz)}}
dc
\end{align*}
Since  $\int M_{\hat g,\gamma}{(dz)}<\infty$ a.s. the integral converges if and only if $(1,f)_{\hat g} >2Q$. In particular taking $f=0$ we see that the total mass of $\nu_{\hat g}$ is infinite. We can do the $c$-integral to get 
\begin{align*}
\int e^{( X,f)_{\hat g}   }\nu_{\hat g}(dX)&=
\gamma^{-1}{\mu^{-s_f}}\Gamma(s_f)  \E_{\hat g}\,\big( e^{(X_{\hat g},f)_{\hat g}  } (\int_\C M_{ \hat g,\gamma}(dz))^{-s_f}\big)
\end{align*} 
where we denoted $s_f=\gamma^{-1}((1,f)_{\hat g}- 2Q)$. We may further simplify this my a shift in the gaussian integral  i.e. by a use of the  Girsanov theorem:
\begin{align}\label{gener}
\int e^{( X,f)_{\hat g}   }\nu_{\hat g}(dX)&=
\gamma^{-1}{\mu^{-s_f}}\Gamma(s_f)e^{\hf (f,G_{\hat g} f)_{\hat g} }  \E_{\hat g}\, \big(\int e^{\gamma (G_{\hat g}f)(z)}M_{ \hat g,\gamma}
(dz)\big)^{-s_f}
\end{align} 
Note how the Laplace transform of the Gaussian measure is modified in the Liouville theory.

We define (regularized) {vertex operators}
$$
V_{\alpha,\epsilon}(z):=\epsilon^{\frac{_{\alpha^2}}{^2}}e^{\alpha\phi_{g,\epsilon}(z)}
$$
and consider their correlation function 
$$
\langle \prod_{i=1}^nV_{\alpha_i}(z_i)\rangle_g:=\lim_{\epsilon\to 0}\int  \prod_{i=1}^nV_{\alpha_i,\epsilon}(z_i)d\nu_{{g}}.$$
Plugging in \eqref{gener} $f=\hat g^{-1}\sum_i\alpha_i\rho_\epsilon\ast \delta_{z_i}$ and recalling
 \eqref{Mdefiround} we get
 \begin{align*}
\langle \prod_{i=1}^nV_{\alpha_i}(z_i)\rangle_{\hat g}=C(\alpha)\gamma^{-1}{\mu^{-s}}\Gamma(s)  e^{\sum_{i<j}G_{\hat g}(z_i,z_j)\alpha_i\alpha_j}e^{\sum_i\alpha_i\frac{_Q}{^2}\ln\hat g(z_i)}  \E_{\hat g}\, \big(\int e^{\gamma\sum_i\alpha_iG_{\hat g}(z,z_i)}M_{ \hat g,\gamma}(dz)\big)^{-s}
\end{align*} 
with 
$s=\gamma^{-1}(\sum_i\alpha_i-2Q)$ and we need  the condition
 \begin{align}\label{Seib1}
\sum_i\alpha_i>2Q
\end{align} 
for convergence of the $c$-integral. 
Using the expression for the Green function \eqref{covg1} ,\eqref{covg2}
we arrive at
\begin{equation*}
\langle \prod_{i=1}^nV_{\alpha_i}(z_i)\rangle_{\hat g}=C'(\alpha) \prod_{j < k} \frac{1}{|z_j-z_k|^{\alpha_j \alpha_k}} \mu^{-s} \gamma^{-1}\Gamma(s)\E\,  \big(\int F(z)M_{ \hat g,\gamma}(dz)\big)^{-s}  
\end{equation*}
where
\begin{equation*}
F(z)=   \prod_i \frac{1}
{|z-z_i|^{\gamma \alpha_i}}  { \hat g}(z)^{ - \frac{\gamma}{4} \alpha_i }
 .
\end{equation*}
Note that the expectation is finite due to Proposition \ref{basic} since
$$
M_{ \hat g,\gamma}(F)^{-s} \leq M_{ \hat g,\gamma}(F1_{B_r})^{-s}\leq (\inf_{z\in B_r}F(z))^{-s} M_{ \hat g,\gamma}(B_r)^{-s}.
$$
However, the expectation may be zero due to blowup of the integral. This is the content of
\begin{proposition}\label{vetrexp}
$0<\langle \prod_{i=1}^nV_{\alpha_i}(z_i)\rangle_g<\infty$ if and only if {$\sum\alpha_i>2Q$} and  $\alpha_i<Q$.
\end{proposition}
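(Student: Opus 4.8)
The plan is to start from the formula for the correlation obtained above (working with the round metric $\hat g$, which is justified by Proposition~\ref{Weyl}, the correlation in any other conformal metric differing from it by a positive finite factor):
\[
\langle\textstyle\prod_{i=1}^nV_{\alpha_i}(z_i)\rangle_{\hat g}=C'(\alpha)\,\prod_{j<k}|z_j-z_k|^{-\alpha_j\alpha_k}\,\mu^{-s}\gamma^{-1}\Gamma(s)\;\E\big[M_{\hat g,\gamma}(F)^{-s}\big],\qquad F(z)=\prod_i\frac{\hat g(z)^{-\gamma\alpha_i/4}}{|z-z_i|^{\gamma\alpha_i}},
\]
with $s=\gamma^{-1}(\sum_i\alpha_i-2Q)$, and to decide the sign and finiteness of the right-hand side factor by factor. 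The prefactor $\mu^{-s}\gamma^{-1}\Gamma(s)$ is finite and nonzero precisely when $s>0$, i.e.\ $\sum_i\alpha_i>2Q$; if instead $\sum_i\alpha_i\le 2Q$ then, as in the computation preceding \eqref{Seib1}, the $c$-integral defining the correlation diverges at $c\to-\infty$ for every $\epsilon$, so $\langle\prod_iV_{\alpha_i}(z_i)\rangle_{\hat g}=+\infty$. This gives the necessity of $\sum_i\alpha_i>2Q$ and reduces the proposition to the statement: \emph{for $s>0$ one has $0<\E[M_{\hat g,\gamma}(F)^{-s}]<\infty$ if and only if $\alpha_i<Q$ for all $i$.} Finiteness requires no further hypothesis and was already noted above, via $M_{\hat g,\gamma}(F)^{-s}\le(\inf_{B_r}F)^{-s}M_{\hat g,\gamma}(B_r)^{-s}$ for a ball $B_r$ disjoint from $\{z_1,\dots,z_n\}$ together with Proposition~\ref{basic}(b). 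Since $M_{\hat g,\gamma}(F)^{-s}\ge 0$, positivity of its expectation is equivalent to $\P(M_{\hat g,\gamma}(F)<\infty)>0$. Fixing $\delta>0$ so that the $B_\delta(z_i)$ are disjoint, $F$ is bounded above and below by positive constants on $\hat\C\setminus\bigcup_iB_\delta(z_i)$ (the $\hat g$-factor cancels the growth at $\infty$) and $M_{\hat g,\gamma}(\hat\C)<\infty$ a.s., so $M_{\hat g,\gamma}(F)<\infty$ a.s.\ if and only if $M_i:=\int_{B_\delta(z_i)}|z-z_i|^{-\gamma\alpha_i}M_{\hat g,\gamma}(dz)<\infty$ a.s.\ for every $i$.

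For the "if" direction I would run a fractional-moment bound. Fix $i$, take $\delta=2^{-k_0}$, and split $B_\delta(z_i)$ into the dyadic shells $A_k=\{2^{-k-1}<|z-z_i|\le 2^{-k}\}$, $k\ge k_0$. For $p\in(0,1)$ (which also ensures $p<4/\gamma^2$), subadditivity of $t\mapsto t^p$ yields
\[
\E\big[M_i^p\big]\le\sum_{k\ge k_0}2^{(k+1)p\gamma\alpha_i}\,\E\big[M_{\hat g,\gamma}(A_k)^p\big]\le C\sum_{k\ge k_0}2^{k(p\gamma\alpha_i-\xi(p))},
\]
where I used $A_k\subset B_{2^{-k}}(z_i)$ and the moment bound \eqref{Moment1} with $\xi(p)=\gamma Qp-\tfrac12\gamma^2p^2$. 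The series converges provided $p\gamma\alpha_i<\xi(p)$, i.e.\ $\alpha_i<Q-\tfrac{\gamma}{2}p$, which can be arranged by choosing $p>0$ small once $\alpha_i<Q$. Hence $M_i<\infty$ a.s.; doing this for each $i$ gives $M_{\hat g,\gamma}(F)<\infty$ a.s., and therefore $\E[M_{\hat g,\gamma}(F)^{-s}]>0$.

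The "only if" direction is the main obstacle. Suppose $\alpha_j\ge Q$ for some $j$; by monotonicity of $F$ in $\alpha_j$ it suffices to treat $\alpha_j=Q$ and to show $M_j=\infty$ a.s., which forces $M_{\hat g,\gamma}(F)=\infty$ a.s.\ and hence $\E[M_{\hat g,\gamma}(F)^{-s}]=0$. I would exploit the radial structure of the GFF at $z_j$: decomposing $X_{\hat g}$ near $z_j$ into its circle averages and the orthogonal remainder, the circle-average process $t\mapsto X_{\hat g,e^{-t}}(z_j)$ is a Brownian motion $B_t$ up to a bounded-variance correction, and by the Cameron--Martin/Girsanov shift the shell masses satisfy $M_{\hat g,\gamma}(A_k)\approx 2^{-k\gamma Q}e^{\gamma B_{k\ln 2}}\,\widehat M_k$, where $\widehat M_k$ is the chaos mass of the fine field on $A_k$; these $\widehat M_k$ are individually non-degenerate and, along well-separated scales, nearly independent. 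Since $|z-z_j|^{-\gamma Q}\ge 2^{k\gamma Q}$ on $A_k$, bounding $M_j$ from below reduces to the divergence of $\sum_k e^{\gamma B_{k\ln 2}}\widehat M_k$, equivalently of the exponential functional $\int_0^\infty e^{\gamma B_t-\gamma(Q-\alpha_j)t}\,dt$ of a Brownian motion with drift $-\gamma(Q-\alpha_j)$: this is finite a.s.\ when $\alpha_j<Q$ and infinite a.s.\ when $\alpha_j\ge Q$ (for $\alpha_j=Q$ by recurrence of $B$). The delicate points are the scale decomposition with uniform error control, the Girsanov step, and — the genuinely hard part — upgrading "$\widehat M_k$ is bounded below with positive probability, uniformly in $k$" into an a.s.\ divergence of the sum, which requires the conditional independence of the fine fields at separated scales (a Borel--Cantelli / second-moment argument along the GMC cascade). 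Estimates of exactly this type appear in the multiplicative chaos literature \cite{DKRV,DRSVreview}, and I would adapt or quote them.
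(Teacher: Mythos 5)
Your proof is correct and follows essentially the same route as the paper's sketch: you reduce positivity to finiteness of a small fractional moment of the chaos mass weighted by $|z-z_i|^{-\gamma\alpha_i}$ via a dyadic-shell decomposition (invoking the stated bound \eqref{Moment1} where the paper re-derives it in situ by Kahane convexity and exact scaling, arriving at the same threshold $p<2(Q-\alpha_i)/\gamma$), and you reduce the vanishing at $\alpha_i\ge Q$ to the radial Brownian-motion decomposition of the GFF near $z_i$, deferring the same technical limit to the literature (\cite{DKRV2}). There is no gap beyond what the paper's own ``scetch of proof'' also leaves to references.
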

These bounds for $\alpha_i$ are called {\it Seiberg bounds}. Note that they imply that we need  {\it at least three} vertex operators to have a finite correlation function.

\noindent {\it Scetch of proof.} Let $Z=M_{ \hat g,\gamma}(F)$ and write $Z=\sum_{i=0}^{n}Z_i$ where in $Z_i$ the integration is over a small ball around $z_i$ if $i>0$ and in the complement of all the balls if $i=0$. Then, for $0<p<1$
\begin{align*}
\E Z^{-s}\geq r^{-s}\P(Z<r)= r^{-s}(1-\P(Z>r))\geq  r^{-s}(1-r^{-p}\E Z^p)\geq  r^{-s}(1-r^{-p}\sum_i\E Z_i^p)
\end{align*} 
where in the last step we used subadditivity. By Proposition \ref{basic} $\E Z_0^p<\infty$ for $p<4/\gamma^2$ since $F<C$ on the support. Thus $\E Z^{-s}>0$ follows from $\E Z_i^p<\infty$ for some $0<p<1$ and all $i$. On the other hand,
\begin{align*}
\E Z^{-s}\leq \E Z_i^{-s}
\end{align*} 
so $\E Z^{-s}=0$ follows from $ \E Z_i^{-s}=0$ for some $i$.

For the first case, we use Kahane convexity (see \cite{review}). 
In $B_i$ we can bound
$$
\ln |z-u|^{-1}_+\leq G_{\hat g}(z,u)+A
$$ 
so that comparing chaos with field $X_+$ to $X_{\hat g}+n$ where $n$ is normal with variance $A$ Kahane convexity gives us   for $0<p<1$  
$$
\E Z_i^p\leq C \E( \int_\D|z|^{-\gamma\alpha_i}dM_+)^p
$$
Write $\D=\cup_{n=0}^\infty A_n$ where $A_n$ is annulus with radi $2^n$ and $2^{n+1}$. 
The field $X_+$ satisfies 
$$X_+(2^{-n} \cdot)\stackrel{law} =X_+( \cdot)+x_n$$ where the summands are independent and $x_n$ is normal with variance $ \ln 2^n$. Hence
$$
 I_n:=\int_{A_n}|z|^{-\gamma\alpha_i}dM_+\stackrel{law}=e^{\gamma x_n-\hf\gamma^2 \ln 2^n} 2^{(\gamma\alpha_i-2)n}I_0=e^{\gamma x_n} 2^{\gamma(\alpha_i-Q)n}I_0
$$
and 
$$
\E Z_i^p\leq C \sum_n\E I_n^p\leq C \sum_n 2^{(\gamma\alpha_i-\gamma Q)np}\E I_0^p\,\E e^{\gamma px_n}\leq C \sum_n 2^{(\gamma\alpha_i-\gamma Q)np+\hf\gamma^2p^2n}
$$
which converges if $p<\frac{2(Q-\alpha_i)}{\gamma}$.

For the second claim by M\"obius invariance  it suffices to suppose $z_i=0$ and $B_i=\D$.
We use the following "radial" decomposition of the GFF
 (see \cite{cf:DuSh,DKRV2}). Let 
 \begin{align*}
X_{\hat g,r}:=\frac{1}{2\pi}\int X_{\hat g}(re^{i\theta})d\theta.
\end{align*} 
Then 
\begin{align*}
X_{\hat g}(z)= X_{\hat g,|z|}+Y(z)
\end{align*} 
where the fields on the RHS are independent and $Y$ has the covariance
\begin{equation*}
\E   Y(re^{i\theta}) Y(r'e^{i\theta'})   = \ln \frac{ r \vee r'}{|re^{i \theta}-r'e^{i \theta'}|}
\end{equation*}
and the process 
$$ B_t:=  X_{\hat{g}, e^{-t}}(0)
$$
is a Brownian motion starting at $B_0= X_{\hat{g},1}(0)$, an independent gaussian variable of variance $\caO(1)$. This leads to the following expression for the chaos
$$ \int_\D\frac{1}{|x|^{\gamma \alpha}}\,M_\gamma(dx) =\int_0^\infty \int_0^{2\pi}e^{\gamma B_t-(Q-\alpha)t
}\,\mu_{Y}(dt,d\theta).$$
where $$\mu_{Y}(dt,d\theta)=const. e^{\gamma Y(e^{-r+i\theta})-\frac{\gamma^2}{2}\E[Y(e^{-r+i\theta})^2]}\hat{g}(e^{-r})\,d\theta dr $$  is a chaos measure independent of the  process $B_t$. 
Note that the drift term vanishes as $\alpha\to Q$. Let
$$
Z_t:=\int_0^t \int_0^{2\pi}e^{\gamma B_s
}\,\mu_{Y}(ds,d\theta).
$$
Recall that $\P(\sup_{s<t}B_s<k)\leq kt^{-\hf}$. This leads us to expect that $\E Z_t^{-s}=\caO(t^{-\hf})$. Indeed, this is true:
$$\lim_{t\to\infty}  t^\hf\E Z_t^{-s}
$$ 
exits and is nonzero. Therefore the correct normalization of the vertex operators for $\alpha=Q$ is
$$V_{Q,\epsilon}=(\ln \epsilon^{-1})^\hf\epsilon^{\frac{Q^2}{2}}e^{Q\phi_{\hat g\epsilon}}.$$ Then
Proposition \ref{vetrexp} holds also for $\alpha_i\leq Q$.

\subsection{KPZ Conjecture for Measure and Correlations}\label{lme}

Let us now return the scaling limit of random triangulations. Define the probability measure
$$
d\P_{{\bf z},\gamma}:=
\langle \prod_{i=1}^3V_{\gamma}(z_i)\rangle_{\hat g}^{-1}\prod_{i=1}^3V_{\gamma}(z_i)d\nu_{ \hat g}$$
We may then state the KPZ conjecture for the random measure $\rho_{\mu,\gamma}$ obtained from scaling limit of triangulations. We conjecture that the law of   $\rho_{\mu,\gamma}$ equals the law of the measure $e^{\gamma\phi_{\hat g}}dz:=e^{\gamma c}M_{ g,\gamma}(dz)$ under 
$\P_{{\bf z},\gamma}$. 
Let us check that the law for the total volume $A=\int_\C e^{\gamma\phi_{\hat g}}dz$ matches. By a simple change of variables in the $c$-integration $e^{\gamma c}M_{g,\gamma}(\C)=A$ we obtain
$$
 \E F(A) =\frac{\mu^s}{\Gamma (s)}\int_0^{\infty}F(y)y^{s-1}e^{-\mu y}\,dy.$$
where $s=(3\gamma-2Q)/\gamma=2-4/\gamma^2$ i.e. under $\P_{{\bf z}, \gamma}$ the law of  $A$ is  $\Gamma(2-4/\gamma^2,\mu ) $. This  agrees with the result in random surfaces.
 Note that the conformal weight $\Delta_\gamma=1$ (see next Section) so the vertex operator $e^{\gamma \phi_g}$ transforms under conformal maps as a density.
 
 For the Ising model random field \eqref{ising} the KPZ conjecture says that its correlation functions converge
 \begin{align*}
\lim_{\epsilon\to 0}\E\Phi^{(\epsilon)}(u_1)\dots \Phi^{(\epsilon)}(u_n)=\E\sigma(u_1)\dots \sigma(u_n)\E_{{\bf z}, \gamma}V_\alpha(u_1)\dots V_\alpha(u_n)
\end{align*}
where $\E\sigma(u_1)\dots \sigma(u_n)$ are the correlation functions of the Ising model in the scaling limit on $\hat\C$ and  $\alpha$ is determined from the requirement $\frac{1}{16}+\Delta_{\alpha}=1$ which means that $\sigma(z)e^{\alpha\phi_g(z)}$  transforms under conformal maps as a density.
\subsection{Conformal Ward Identities}
So far we have motivated the Liouville model through its conjectural relationship  to scaling limits of random triangulations. However, the Liouville model  is also an interesting {\it Conformal Field Theory} by itself. This way of looking we view the vertex operators as (Euclidean) quantum fields.

First, using the M\"obius invariance (Corollary \ref{mobius})  of $\nu_g$  and taking care with the transformation of the $\epsilon$ in the vertex operator one gets 
\begin{align*}
\langle \prod_{i=1}^nV_{\alpha_i}(\psi(z_i))\rangle_g&=\lim_{\epsilon\to 0}\langle \prod_{i=1}^n\epsilon^{\frac{\alpha_i^2}{2}}e^{\alpha_i\phi_{g,\epsilon}(\psi(z_i))}\rangle_g=\lim_{\epsilon\to 0}\langle \prod_{i=1}^n|\epsilon\psi'(z_i)|^{\frac{\alpha_i^2}{2}}e^{\alpha_i(\phi_{g}\circ\psi)_\epsilon((z_i)}\rangle_g
\\
&=\prod_i|\psi'(z_i)|^{-2\Delta_{\alpha_i}}\langle \prod_{i=1}^nV_{\alpha_i}(z_i)\rangle_g
\end{align*}
where $\Delta_{\alpha}=\frac{\alpha}{2}(Q-\frac{\alpha}{2})$. In CFT parlance, $V_\alpha$ is a {\it primary field} with conformal weight $\Delta_{\alpha}$.

Second, the Liouville model has also {\it local conformal symmetry}. In CFT  this derives from the  {\it energy-momentum tensor}  which encodes the variations of the theory with respect to the background metric. In classical field theory this is defined as follows. Let $S(g,X)$ be an action functional where $g=
g_{\alpha\beta}dx^\alpha\otimes dx^{\beta}$ (we use summation convention of repeated indices) is a smooth Riemannian metric. In Liouville case
$$
S(g,X)=
\int (g^{\alpha\beta}\partial_\alpha X\partial_\beta X+QR_gX+\mu e^{\gamma X})\sqrt{\det g}dx
$$ 
where $g^{\alpha\beta}$ is the inverse matrix  $g^{\alpha\beta}g_{\beta\gamma}=\delta^\alpha_\gamma$. Then the EM tensor $T_{\alpha\beta}(x)$ is defined by
$$
\partial_\epsilon|_0S(g_\epsilon,X)=\int T_{\alpha\beta}(x)f^{\alpha\beta}(x)\sqrt{\det g}dx
$$
where $g_\epsilon^{\alpha\beta}=g^{\alpha\beta}+\epsilon f^{\alpha\beta}$. For Liouville model one finds that the only interesting component of $T$ in complex coordinates is $T_{zz}:=T(z)$ which is classically analytic $\partial_{\bar z}T=0$ if $X$ satisfies the Euler-Lagrange equations. In quantum theory one defines in the same way
\begin{equation}\label{defgen}  
 \frac{_d}{^{d\epsilon}}\mid_{\epsilon=0} \langle   \prod_l V_{\alpha_l}(z_l)   \rangle_{g_\epsilon}:=  \frac{1}{4\pi}   \int f^{\alpha\beta}(z)\langle T_{\alpha\beta}(z) \prod_l V_{\alpha_l}(z_l)    \rangle_g  dz.
 \end{equation}
for $f$  a smooth function with support in $\C\setminus \cup_i z_i$. 

 A simple formal computation then yields the following  heuristic formula 
 \begin{equation}\label{defforus}
 T(z)=  Q \partial_{z}^2 \phi(z)- (( \partial_{z}\phi(z))^2-\E( \partial_{z}X_g(z))^2)
 \end{equation}
 where $\phi$ is the { Liouville field}.

  $T(z)$ encodes { local conformal symmetries} through  the {\it Conformal Ward Identities}. The first Ward identity says the correlation function is meromorphic in the argument of $T(z)$ with prescribed singularities: 
\begin{equation}
  \langle T(z) \prod_l V_{\alpha_l}(z_l)   \rangle= \sum_{k} \frac{\Delta_{\alpha_k} }{(z-z_k)^2} \langle  \prod_l V_{\alpha_l}(z_l)   \rangle   -\sum_{k} \frac{1}{z-z_k} \partial_{z_k}\langle  \prod_l V_{\alpha_l}(z_l)   \rangle  \quad\label{wardid1}
 \end {equation}
 and 
 the second identity controls the singularity when two $T$-insertions come close  \begin{align}  \nonumber
    \langle T(z)T(z') \prod_l V_{\alpha_l}(z_l)   \rangle
&=\frac{\hf c_{\mathrm{L}} }{(z-z')^4}  \langle T(z')T(z) \prod_l V_{\alpha_l}(z_l)   \rangle +\frac{2 }{(z-z')^2} \langle T(z') \prod_l V_{\alpha_l}(z_l)   \rangle\\& +\frac{1 }{z-z'} \partial_{z'} \langle T(z') \prod_l V_{\alpha_l}(z_l)   \rangle+\dots\label{wardid2}
 \end {align}
where the dots refer to terms that are bounded as $z\to z'$. To prove these identities we need to define what we mean by the LHS. Let $\phi_\epsilon$ be a regularization of the Liouville field. Set
 \begin{equation}\label{defforus}
 T_\epsilon(z)=  Q \partial_{z}^2 \phi_\epsilon(z)- (( \partial_{z}\phi_\epsilon(z))^2-\E( \partial_{z}X_{\hat g,\epsilon}(z))^2)
 \end{equation}
 and define $ \langle T(z) \prod_l V_{\alpha_l}(z_l)   \rangle$ as the limit of $ \langle T_\epsilon(z) \prod_l V_{\alpha_l}(z_l)   \rangle$ and similarly for the two $T$ insertions.
Let us see how the first Ward identyty follows by formal calculation before commenting on the mathematical problems in actually making it rigorous.

The basic formula is the following  identity:
\begin{align} 
 \langle \partial_z\phi(z)  \prod_k  V_{\alpha_k  }(z_k)\rangle &=-\hf \sum_i\alpha_i \frac{1}{z-z_i}
  \langle  \prod_k  V_{\alpha_k  }(z_k)\rangle
 +\hf\mu\gamma\int  \frac{1}{z-y}
 \langle  V_{\gamma }(y)\prod_k  V_{\alpha_k }(z_k)\rangle
 dy\label{ipp1}
\end{align}
To prove this first note that  by integration by parts in the Gaussian measure:
\begin{align} \nonumber
 \langle X_{\hat g}(z)  \prod_k  V_{\alpha_k }(z_k)\rangle =& \sum_i\alpha_iG_{\hat g}(z, z_i)  \langle  \prod_k  V_{\alpha_k  }(z_k)\rangle -\mu\gamma\int G_{\hat g}(z, y) \langle  V_{\gamma }(y)\prod_k  V_{\alpha_k  }(z_k)\rangle dy\label{ipp}.
\end{align}
Recalling the definition of the Liouville field \eqref{Liouville field} we then get
\begin{align*} 
 \langle \partial_z\phi(z)  \prod_k  V_{\alpha_k }(z_k)\rangle &=-\hf \sum_i\alpha_i \frac{1}{z-z_i}
 \langle  \prod_k  V_{\alpha_k }(z_k)\rangle +\hf\mu\gamma\int  \frac{1}{z-y}
 \langle  V_{\gamma }(y)\prod_k  V_{\alpha_k }(z_k)\rangle
 dy
 \\&
 -\frac{_1}{^4}\partial_z\ln\hat g(z)\big((2Q-\sum\alpha_i)  \prod_k  V_{\alpha_k }(z_k)\rangle
 +\mu\gamma\int  \langle  V_{\gamma }(y)\prod_k  V_{\alpha_k }(z_k)\rangle
 dy\big) 
\end{align*}
The metric dependent term actually vanishes due to the following  identity
\begin{lemma}\label{kpzid}  (KPZ-identity)
\begin{equation} 
\mu\gamma\int  \langle V_{\gamma}(y)  \prod_k  V_{\alpha_k  }(z_k)\rangle 
= (\sum_i \alpha_i-2Q) \,   \langle  \prod_k  V_{\alpha_k }(z_k)\rangle  .
\label{kpzid}
\end{equation}
\end{lemma}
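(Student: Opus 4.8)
The plan is to read off \eqref{kpzid} by differentiating the correlation functions with respect to the cosmological constant $\mu$, exploiting that their $\mu$-dependence is completely explicit. By the Weyl covariance of Proposition~\ref{Weyl} it is enough to work with the round metric $\hat g$; then $R_{\hat g}=2$, the curvature term contributes only the factor $e^{-2Qc}$, and the integrand defining $\langle\prod_k V_{\alpha_k}(z_k)\rangle_{\hat g}$ depends on $c$ and $\mu$ solely through $e^{(\sum_k\alpha_k-2Q)c}e^{-\mu e^{\gamma c}M_{\hat g,\gamma}(1)}$. Carrying out the $c$-integral by the substitution $u=\mu e^{\gamma c}M_{\hat g,\gamma}(1)$ — legitimate precisely when $\sum_k\alpha_k>2Q$ — factors out a prefactor $\mu^{-s}$, $s=\gamma^{-1}(\sum_k\alpha_k-2Q)$, leaving a $\mu$-independent remainder; this is exactly the closed formula for $\langle\prod_k V_{\alpha_k}(z_k)\rangle_{\hat g}$ obtained above, so that
\begin{equation*}
\mu\,\partial_\mu\langle\prod_k V_{\alpha_k}(z_k)\rangle_{\hat g}=-s\,\langle\prod_k V_{\alpha_k}(z_k)\rangle_{\hat g}.
\end{equation*}

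On the other hand, the only $\mu$-dependent factor in $\nu_{\hat g}$ is the cosmological term $e^{-\mu e^{\gamma c}M_{\hat g,\gamma}(1)}=e^{-\mu\int_{\C}e^{\gamma\phi_{\hat g}(y)}dy}$, and by the very definition of the regularized vertex operator $e^{\gamma c}M_{\hat g,\gamma,\epsilon}(dy)=V_{\gamma,\epsilon}(y)\,dy$, so $\int_{\C}e^{\gamma\phi_{\hat g}(y)}dy$ is nothing but the insertion $\int_{\C}V_\gamma(y)\,dy$. Differentiating in $\mu$ therefore brings this insertion down inside the correlator, yielding
\begin{equation*}
\mu\,\partial_\mu\langle\prod_k V_{\alpha_k}(z_k)\rangle_{\hat g}=-\mu\int_{\C}\langle V_\gamma(y)\prod_k V_{\alpha_k}(z_k)\rangle_{\hat g}\,dy.
\end{equation*}
Equating the two displays and using $\gamma s=\sum_k\alpha_k-2Q$ gives \eqref{kpzid}.

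The step that requires real work is this second identity: one must justify differentiating in $\mu$ under the Gaussian expectation and under the regularization limit $\epsilon\to0$, and interchanging the resulting integral over $y$ with expectation and limit. This hinges on the a priori finiteness of $\int_{\C}\langle V_\gamma(y)\prod_k V_{\alpha_k}(z_k)\rangle\,dy$ together with a bound uniform for $\mu$ in a neighbourhood of the given value; the latter follows from $t\mapsto te^{-\mu t}$ being bounded and from the $c$-integrability secured by $\sum_k\alpha_k>2Q$, while the former follows from the closed formula for the $(n{+}1)$-point function and the moment estimates of Proposition~\ref{basic}. Indeed, near a collision $y\to z_k$ the integrand is controlled by $|y-z_k|^{-\gamma\alpha_k}$ (and when $\alpha_k+\gamma\ge Q$ even the leading term drops out, leaving a milder singularity), and $\gamma\alpha_k<2$ holds whenever $\alpha_k+\gamma<Q$ because $Q-\gamma=\frac{2}{\gamma}-\frac{\gamma}{2}$; as $y\to\infty$ the conformal weight $\Delta_\gamma=1$ forces decay $|y|^{-4}$; thus the Seiberg bounds $\alpha_k<Q$ and $\sum_k\alpha_k>2Q$ (Proposition~\ref{vetrexp}) are exactly what is needed. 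Alternatively, one can bypass the differentiation and prove \eqref{kpzid} by direct computation: substitute the closed formula for $\langle V_\gamma(y)\prod_k V_{\alpha_k}(z_k)\rangle$ into the left-hand side and perform the $y$-integration using Girsanov's theorem on the chaos measure — the weight $e^{\gamma^2G_{\hat g}(\cdot,y)}$ produced by the extra insertion corresponds to shifting the GFF by $\gamma G_{\hat g}(\cdot,y)$ — so that $\int_{\C}F(y)M_{\hat g,\gamma}(dy)$ reappears and collapses $\E\big[(\cdots)^{-s-1}\big]$ back to $\E\big[M_{\hat g,\gamma}(F)^{-s}\big]$, the deterministic prefactors combining, after some algebra, into the factor $\sum_k\alpha_k-2Q$.
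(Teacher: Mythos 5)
Your proposal is correct and follows essentially the same route as the paper: extract the explicit $\mu^{-s}$ dependence of the correlator by a shift of the zero mode $c$ (the paper uses $c'=c+\gamma^{-1}\ln\mu$ rather than carrying out the $c$-integral as a Gamma function, but this is the same mechanism), then differentiate in $\mu$ both on the explicit formula and under the expectation, where it brings down the insertion $-\int V_\gamma(y)\,dy$. The additional remarks on justifying the interchange of $\partial_\mu$ with the limits and on the integrability of the $(n{+}1)$-point function are sensible elaborations of what the paper leaves implicit, but they do not change the argument.
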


\proof By a simple change of variables $\gamma^{-1}\ln \mu+c=c'$, we get
\begin{align*} 
  \langle  \prod_k  V_{\alpha_k  }(z_k)\rangle  = &\int_{\R}   e^{-2Q c}  \:   \E[   \prod_k  V_{\alpha_k}(z_k)        e^{- \mu   \int_{\C}e^{\gamma\phi}\,dz    }]   \: dc\\
  = &\mu^{-\frac{\sum_i\alpha_i-2Q}{\gamma}} \int_{\R}   e^{-2Q c'}  \:   \E[   \prod_k  V_{\alpha_k}(z_k)        e^{-   \int_{\C}e^{\gamma\phi}\,dz   }]   \: dc'.
\end{align*}
The identity follows by differentiating in $\mu$. \qed

Using the integration by parts formula \eqref{ipp1}  we get
 \begin{align*}
 \langle\partial_{z}^2\phi(z) )\prod_l V_{\alpha_l}(z_l)   \rangle&=
\hf\sum_{i}\alpha_i  \frac{1}{(z-z_i)^2} \langle\prod_k  V_{\alpha_k }(z_k)\rangle
-\hf\mu\gamma\int
   \frac{1}{(z-y)^2} 
   \langle   V_\gamma(y) \prod_kV_{\alpha_k }(z_k)\rangle dy
 \end{align*}
and
 \begin{align*}
  \langle( (\partial_z \phi(z))^2-& \E (\partial_z X(z))^2  ) \prod_l V_{\alpha_l}(z_l)   \rangle
=
\frac{_1}{^4} \sum_{j,k} \frac{ \alpha_j \alpha_k}{(z-z_k)(z-z_j)} \langle\prod_k  V_{\alpha_k }(z_k)\rangle
 \\& -\hf \mu \gamma     \sum_{k} \alpha_k  \frac{1}{(z-z_k)} \int \frac{1}{z-y} \langle   V_\gamma(y) \prod_kV_{\alpha_k }(z_k)\rangle
 dy -\frac{_1}{^4} \mu \gamma^2     \int   \frac{1}{(z-y)^2} \langle   V_\gamma(y) \prod_kV_{\alpha_k }(z_k)\rangle
 dy\\&+\frac{_1}{^4} \mu^2 \gamma^2 \int\frac{1}{z-y} \frac{1}{z-x} \langle   V_\gamma(y) V_\gamma(x) \prod_kV_{\alpha_k }(z_k)\rangle
 dy dx
 \end{align*}
 Combining we get
\begin{align*}
\langle T(z)\prod V_{\alpha_i}(\z)\rangle&=
(\frac{_Q}{^2}\sum_{i}\alpha_i  \frac{1}{(z-z_i)^2} -\frac{_1}{^4} \sum_{j,k} \frac{ \alpha_j \alpha_k}{(z-z_k)(z-z_j)}) \langle\prod_k  V_{\alpha_k }(z_k)\rangle
\nonumber \\& +\hf\mu  \gamma   \sum_{k} \alpha_k  \frac{1}{z-z_k} \int \frac{1}{z-y}
\langle   V_\gamma(y)  \prod_kV_{\alpha_k }(z_k)\rangle dy \\
&-           \mu \int \frac{1}{(z-y)^2} 
\langle   V_\gamma(y)  \prod_kV_{\alpha_k }(z_k)\rangle dy-\frac{_1}{^4} \mu^2 \gamma^2 \int \frac{1}{z-y} \frac{1}{z-x}\langle   V_\gamma(y) V_\gamma(x) \prod_kV_{\alpha_k }(z_k)\rangle
dy dx
 \end{align*} 
Integrating  by parts we have
  \begin{align}
  -           \mu& \int \frac{1}{(z-y)^2} 
  \langle  V_\gamma(y)   \prod_kV_{\alpha_k }(z_k)\rangle dy=
 -\mu\int\partial_{y} \frac{1}{z-y}  \langle V_\gamma(y)    \prod_kV_{\alpha_k }(z_k)\rangle 
 dy=\mu\int \frac{1}{z-y}\partial_{y} 
 \langle   V_\gamma(y)\prod_kV_{\alpha_k }(z_k)\rangle 
 dy\nonumber\\
 &=-\hf\gamma\mu\sum_i \alpha_i\int  \frac{1}{z-y}  \frac{1}{y-z_i} \langle   V_\gamma(y) \prod_kV_{\alpha_k }(z_k)\rangle 
 dy+
 \frac{_1}{^2} \mu^2 \gamma^2 \int \frac{1}{z-y}\frac{1}{z-x} \langle   V_\gamma(y) V_\gamma(x) \prod_kV_{\alpha_k }(z_k)\rangle
 dy dx\nonumber\\
 &=-\hf\gamma\mu\sum_j \alpha_j \frac{1}{z-z_j} ( \int  \frac{1}{z-y} 
 \langle   V_\gamma(y)  \prod_kV_{\alpha_k }(z_k)\rangle dy+ \int  \frac{1}{y-z_i} 
 \langle   V_\gamma(y)  \prod_kV_{\alpha_k }(z_k)\rangle dy)\nonumber\\
 & +\frac{_1}{^4} \mu^2 \gamma^2 \int \frac{1}{z-y} \frac{1}{z-x} \langle   V_\gamma(y)V_\gamma(x)\prod_kV_{\alpha_k }(z_k)\rangle 
 dy dx \label{ipp}
 \end{align} 
 so that 
 \begin{align*}
\langle T(z)\prod V_{\alpha_i}(\z)\rangle&=
(\frac{_Q}{^2}\sum_{i}\alpha_i  \frac{1}{(z-z_i)^2} -\frac{_1}{^4} \sum_{j,k} \frac{ \alpha_j \alpha_k}{(z-z_k)(z-z_j)}) \langle\prod_k  V_{\alpha_k }(z_k)\rangle
\nonumber \\& -\hf\mu  \gamma   \sum_{k} \alpha_k  \frac{1}{z-z_k} \int \frac{1}{y-z_k} \langle   V_\gamma(y) \prod_kV_{\alpha_k }(z_k)\rangle 
dy \end{align*} 
On the other hand using \eqref{ipp1}
\begin{align}
 \partial_{z_i}\langle  \prod_k  V_{\alpha_k  }(z_k)\rangle& =
  \alpha_i\langle \partial_z \phi(z_i)\prod_k  V_{\alpha_k  }(z_k)\rangle\\&=-\hf\sum_{j\neq i}\frac{\alpha_i \alpha_j}{z_i-z_j} \langle\prod_k  V_{\alpha_k }(z_k)\rangle
  +\hf\alpha_i\mu\gamma \int_\C\frac{1}{z_i-y}\langle  V_{\gamma }(y)\prod_k  V_{\alpha_k }(z_k)\rangle dy
 \label{Ydefi}
\end{align}
so that the 1st Ward identity follows.
Let us make some remarks regarding this calculation.

First, for the proof one needs to work with regularized correlations. Then some of the identities used in this calculation are not exact. Worse, some of the resulting integrals are only conditionally convergent. Using multiplicative chaos techniques one can study the divergence of the vertex opeartor correlations as two or more points come together. For instance for two points one gets
\begin{equation*}
\langle  V_{\gamma }(y)\prod_k  V_{\alpha_k  }(z_k)\rangle\leq C|y-z_i|^{-2+\delta}
\end{equation*}
with $\delta>0$. Hence this singularity is integrable (as is also evident from Lemma \ref{kpzid}). However above we need to control $(y-z_i)^{-1}$ times this and the result is {\it not} absolutely integrable. The clue what to do is in equation \eqref{ipp}.  The LHS is the Beltrami transform of the correlator computed at $z\neq z_i$. This is pointwise defined provided the correlator is H\"older continuous which can be shown by multiplicative chaos techniques. The first and third terms are absolutely convergent. As a result this identity  relates the potentially divergent integral to finite ones. For the proof one needs to work with a regularized version of the identity. As an upshot one obtains using \eqref {Ydefi} that the correlation functions are $C^1$.  For the second Ward identity one needs to control singular integrals such as
\begin{align*}
 \int \frac{1}{(z-y)^3} 
  \langle  V_\gamma(y)   \prod_kV_{\alpha_k }(z_k)\rangle dy \end{align*} 
which are related by identities to less singular expressions. Upshot is that the correlations are $C^2$.

\section{Quantum Liouville Theory}

The Liouville model gives rise to {\it Quantum} Field Theory. This means in particular that there is a  canonical construction of a Hilbert space $\caH$ and a representation of the symmetries of the theory as operators acting on $\caH$. This reconstruction of quantum fields is very general and is based on a peculiar positivity property of the random field, the {\it reflection positivity} (or {\it Osterwalder-Schrader positivity}, \cite{OS}).  
\subsection{Liouville functional}
Recall that in the round metric $\hat g$  the Liouville field is given as
$$
\phi=c+X_{\hat g}+\frac{_Q}{^2}\log\hat g.
$$
and the Liouville "expectation" is given by
  \begin{align}\label{lexpe}
\int F(\phi)d\nu_{\hat g}=\int dc\ e^{-2Qc}\E\ e^{-\mu \int_\C  e^{\gamma \phi}dz} F(\phi).
 \end{align}
 It will be convenient to make a change of variables from the zero average field $X_{\hat g}$ to one that has zero average on $\partial\D$. Let 
 $$m_{\partial\D}(X_{\hat{g}}):=\frac{{1}}{{2\pi}}\int X_{\hat g}(e^{i\theta})d\theta
 .$$ 
 Making a shift in the $c$-integral we get
 \begin{align*} 
 \int F(\phi)d\nu_{\hat g}
= \int dc e^{ -2Qc}\E e^{2Qm_{\partial\D}(X_{\hat{g}})} F(c+X_{\hat{g}} -m_{\partial\D}(X_{\hat{g}})+ Q/2\ln \hat{g})  
 e^{ -\mu
 \int_\C e^{\gamma (c+X_{\hat{g}}-m_{\partial\D}(X_{\hat{g}}) + Q/2\ln \hat{g}) }\,dz}
\end{align*}
By  the Girsanov theorem $X_{\hat g}$ under $e^{2Qm_{\partial\D}(X_{\hat{g}})-2Q^2 \E m_{\partial\D}(X_{\hat{g}})^2}\P_{\hat g}$ equals in law  $X_{\hat g}+2Q\frac{{1}}{{2\pi}}\int G_{\hat g}(z,e^{i\theta})d\theta$ under $\P_{\hat g}$. We have
$$\frac{{1}}{{2\pi}}\int G_{\hat g}(z,e^{i\theta})d\theta=k(z)-\frac{1}{4}\ln \hat{g}(z)+C_{\hat g}
$$
where we have  set
$$k(x)=\ln\frac{1}{|x|}\mathbf{1}_{\{|x|\geq 1\}}$$
 and 
 $ \E[m_{\partial\D}(X_{\hat{g}})^2]=C_{\hat g}$. Hence  
 we get the following expression after a further shift of $c$ by $-2QC_{\hat g}$
\begin{align}\label{FL} 
 \int F(\phi)d\nu_{\hat g}
 = e^{ 6C_{\hat g} Q ^2  }\int dc e^{ -2Qc}\E F(c+\Phi+2Q k)
  e^{ -\mu
 \int_{\C}e^{\gamma (c+\Phi +2Q k) }\,dz}
\end{align}
We have defined 
the field
$$
\Phi=X_{\hat{g}}-m_{\partial\D}(X_{\hat{g}}).
$$
We have arrived to a new representation of the Liouville field as
\begin{align}\label{FL0} 
 \phi=c+\Phi +2Q k
\end{align}
where $\Phi$ is a gaussian field with covariance 
\begin{align*}
G(z,z'):=\E \Phi(x)\Phi(y)=
\ln \frac{1}{|z-z'|}-k(z)-k(z'). 
\end{align*}
 We will construct the quantum theory starting with the  linear functional
\begin{align}\label{FL1} 
 \langle F \rangle
 = \int dc e^{ -2Qc}\E F(\phi)
  e^{ -\mu
 \int_{\C}e^{\gamma \phi }\,dz}
\end{align}
with $\phi$ given by \eqref{FL0}.
 
\subsection{Osterwalder-Schrader positivity}

 For $A\subset\C$ let ${\cal{F}}_A$  be the $\sigma$-algebra generated by $\int_\C\phi f$ with ${\rm supp} f\subset A$.  The Hilbert space is constructed out of  ${\cal{F}}_\D$. Let $\theta:\hat\C\to\hat\C$ be the reflection from the unit circle $\theta(z)=1/\bar z$.  Define $\Theta:{\cal{F}}_{\D}\to{\cal{F}}_{\D^c}$ by
   \begin{align}\label{lexpedefi}
\Theta F(\phi):=\overline{F(\theta \phi-2Q\ln|z|)}
 \end{align}
where $(\theta\phi)(z):=\phi(\theta z)=\phi(1/\bar z)$. Consider now the following sesquilinear form
  \begin{align}\label{scalar}
(F,G):=\langle \Theta F G\rangle.
 \end{align}
for $F,G\in\caF_\D$. OS-positivity is the following statement:

 \begin{proposition}\label{OS} The form  \eqref{scalar} is  positive semidefinite:
  \begin{align}\label{scalar1}
\langle \Theta F F\rangle\geq 0.
 \end{align}
 \end{proposition}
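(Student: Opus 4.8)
**The plan is to prove OS-positivity by conditioning on the boundary data on $\partial\D$ and then using the Markov property of the Gaussian field $\Phi$ together with the explicit form of $\Theta$.**

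First I would decompose the field $\phi = c + \Phi + 2Qk$ restricted to $\D$. The key structural fact is the \emph{domain Markov property} of the Gaussian field $\Phi$: conditionally on the boundary values $\varphi := \Phi|_{\partial\D}$ (equivalently on the full radial average field on $\partial\D$), the restriction $\Phi|_\D$ and $\Phi|_{\D^c}$ are independent, and each is a harmonic extension of $\varphi$ plus an independent zero-boundary-condition Gaussian field. Since $\Phi$ has covariance $\ln\frac{1}{|z-z'|} - k(z) - k(z')$ and $k$ vanishes on $\overline{\D}$, inside $\D$ the field $\Phi$ is simply (a realization of) the GFF pinned to have mean zero on $\partial\D$; crucially the reflection $z\mapsto 1/\bar z$ maps $\D$ to $\D^c$ and the covariance is \emph{symmetric under this reflection up to the explicit $k$-terms}, which is exactly why the twist by $-2Q\ln|z|$ appears in the definition of $\Theta$. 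I would record this as: if $P_\varphi$ denotes the harmonic extension of $\varphi$ to $\C$ and $H$ is the zero-boundary GFF on $\D$ (independent copy $\tilde H$ on $\D^c$), then under $\langle\cdot\rangle$ the Liouville field on $\D\cup\D^c$ has the same law as $c + P_\varphi + 2Qk + H$ on $\D$ and $c + P_\varphi + 2Qk + \tilde H$ on $\D^c$, with $(c,\varphi,H,\tilde H)$ all independent and $\tilde H \stackrel{law}{=} H\circ\theta$.

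Next, with $F,G\in\caF_\D$ measurable functions of $\phi|_\D = c + P_\varphi + 2Qk + H$, I compute $\langle \Theta F \cdot G\rangle$. The factor $e^{-2Qc}$ is symmetric, and the interaction $e^{-\mu\int_\C e^{\gamma\phi}}$ splits as $e^{-\mu\int_\D e^{\gamma\phi}}\, e^{-\mu\int_{\D^c} e^{\gamma\phi}}$. Writing $\theta\phi - 2Q\ln|z|$ and using $k(\theta z) = \ln|z|\cdot\mathbf 1_{|z|\le 1}$ for $z\in\D$, one checks that $\Theta$ acting on a functional of $\phi|_\D$ produces the \emph{same} functional evaluated on $c + P_\varphi + 2Qk + \tilde H$ restricted to $\D^c$ — i.e. $\Theta$ intertwines the $\D$-theory with its mirror image and the $2Q\ln|z|$ twist is precisely what makes the $k$-terms and the Liouville measure transform correctly. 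Therefore, after conditioning on $(c,\varphi)$ and integrating out $H$ and $\tilde H$ independently,
\[
\langle \Theta F\, F\rangle = \int dc\, e^{-2Qc}\, \E_\varphi\Big[\, \big| \E_H\big( \overline{F}\cdot e^{-\mu\int_\D e^{\gamma\phi}}\big)\big|^2\, \Big] \ \geq\ 0,
\]
where $\E_H$ is expectation over the zero-boundary GFF in $\D$ only and $\E_\varphi$ over the boundary field; the square modulus appears because $\Theta$ contributes the complex conjugate of the same $H$-integral evaluated via the mirror field, and the $\D^c$-interaction factor pairs with its conjugate. Positivity is then manifest.

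\textbf{The main obstacle} is making the conditional decomposition and the identity $\Theta(\text{functional of }\phi|_\D) = (\text{mirrored functional of }\phi|_{\D^c})$ rigorous at the level of the Liouville measure rather than just the Gaussian field: one must check that the \emph{nonlinear} interaction term $e^{-\mu\int_\C e^{\gamma\phi}dz}$ — defined through multiplicative chaos — genuinely factorizes across $\partial\D$ (true since the chaos measure gives no mass to the circle) and transforms correctly under $\theta$ together with the $2Q\ln|z|$ shift (this is where Proposition \ref{mobiusforchaos}-type change-of-variables reasoning enters, applied to the inversion $\theta$, which is anti-holomorphic but whose modulus-Jacobian $|z|^{-4}$ combines with the $2Q\ln|z|$ twist and the conformal weight of $e^{\gamma\phi}$). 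One should also verify that $F,G$ can be taken in a dense enough class (e.g. bounded continuous cylinder functionals of $\int\phi f$) so that all the Fubini/conditioning steps are justified and $\Theta F$ lands in $L^1$ of the relevant measure; the moment bounds of Proposition \ref{basic} guarantee the chaos integrals are finite a.s. so this is a technical but routine point. Once the factorization-under-reflection is established, the positivity is just the elementary observation that $\int |h(\varphi)|^2\,(\text{positive measure in }\varphi) \ge 0$.
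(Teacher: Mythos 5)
Your proposal is correct and follows essentially the same route as the paper: decompose $\Phi$ into the Dirichlet fields on $\D$, $\D^c$ plus the harmonic extension $P\varphi$ of the boundary field, use the change-of-variables identity $(e^{\gamma\theta\phi})(z)=|z|^{\gamma^2}(e^{\gamma\phi})(1/\bar z)$ (so that the $-2Q\ln|z|$ twist makes $\Theta$ carry $\int_\D e^{\gamma\phi}$ to $\int_{\D^c}e^{\gamma\phi}$ and hence the interaction factorizes as $I+\Theta I$), and then condition on $(c,\varphi)$ to exhibit $\langle\Theta F\,F\rangle$ as an integral of $|\E_\D(F e^{-\mu I})|^2$ against a positive measure. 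The paper merely organizes this in two steps — first the $\mu=0$ positivity \eqref{1cor}, then the factorization Lemma \eqref{factor} to absorb $e^{-\mu I}$ into $F$ — which is the same argument you give in one pass.
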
 
 The main ingredient in the proof is the corresponding statement for MFF i.e. $\mu=0$ case.Let
 $$
\langle F\rangle_0:=\int e^{-2Qc}\E\, F(\phi)dc.=\int e^{-2Qc}\E\, F(c+\Phi+2Qk)dc.
$$
We will decompose $\Phi$ to independent fields on $\D$, $\D^c$ and $S^1=\partial \D$. For this 
let $\Phi_\D(z)$ be the Dirichlet GFF on  $\D$, i.e. 
\begin{align}\label{diri}
G_\D(z,z'):=\E\Phi_\D(z)\Phi_\D(z')=\log\frac{|1-z\bar z'|}{|z-z'|}.
\end{align}
and $\Phi_{\D^c}(z)$ the  Dirichlet GFF on  $\D^c$ i.e.
$$
\Phi_{\D^c}\stackrel{law}{=} \theta \Phi_\D.
$$
Next note that $\varphi:
{=}\Phi|_{\partial\D}
$
 is the GFF on circle (with zero average)  i.e. concretely
$$
\varphi\stackrel{law}{=}\sum_{n\neq 0}\varphi_ne^{in\theta}\label{0}
$$
where 
$$\varphi_n=\frac{1}{2\sqrt n}(\alpha_n+i\beta_n)\ \ n>0,\ \ \ \varphi_{-n}=\bar\varphi_n
$$ with
$\alpha_n,\beta_n$ i.i.d. $N(0,1)$. 

Let 
$P\varphi$ be  the Harmonic extension of $\varphi$ defined on $\D$ by
\begin{align}\label{Paction}
Pe^{in\theta}=\bar z^n,\ n>0, \ \ \ Pe^{in\theta}=z^{-n},\ n<0,\ \ \ z=re^{-i\theta}\in \D.
\end{align}
i.e.
\begin{align}\label{Paction1}
(P\varphi)(z)=\sum_{n>0}(\varphi_n\bar z^n+\varphi_{-n}z^n)
\end{align}
On $\D^c$ $P\varphi$ is given by  
\begin{align}\label{Paction2}
P\varphi(z){=}(\theta P\varphi)(z),\ \ \ z\in \D^c.
\end{align}
The we have
\begin{proposition}
We may decompose as sum of independent fields:
$$
\Phi\stackrel{law}{=}\Phi_\D+P\varphi+\Phi_{\D^c}.
$$
\end{proposition}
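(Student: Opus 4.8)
The plan is to verify the claimed decomposition at the level of covariances, since all three fields on the right-hand side are centered Gaussian and independent, so their sum is a centered Gaussian whose covariance is the sum of the three covariances; matching this to $G(z,z')=\ln|z-z'|^{-1}-k(z)-k(z')$ suffices to conclude equality in law.

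First I would treat the case $z,z'\in\D$. Here $\Phi_{\D^c}$ contributes nothing (it is supported on $\D^c$), so I need $G(z,z')=G_\D(z,z')+\E(P\varphi)(z)(P\varphi)(z')$. Using \eqref{diri} and $k\equiv 0$ on $\D$, this reduces to showing $\E(P\varphi)(z)(P\varphi)(z')=\ln|1-z\bar z'|$. From \eqref{Paction1} and the covariance $\E\varphi_n\bar\varphi_m=\frac{1}{4n}\delta_{nm}$ for $n,m>0$ (equivalently $\E\varphi_n\varphi_{-m}=\frac{1}{2n}\delta_{nm}$ after accounting for $\varphi_{-n}=\bar\varphi_n$), a direct series computation gives $\E(P\varphi)(z)\overline{(P\varphi)(z')}=\sum_{n>0}\frac{1}{2n}(\bar z^n z'^n + z^n\bar z'^n)$; since the field is real this equals $\mathrm{Re}\sum_{n>0}\frac1n (\bar z z')^n = -\mathrm{Re}\,\ln(1-\bar z z')=\ln|1-z\bar z'|^{-1}$... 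I must be careful with the sign: in fact $-\ln|1-z\bar z'| = \ln|1-z\bar z'|^{-1}$, and indeed $G_\D+ \ln|1-z\bar z'|^{-1}\cdot(-1)$ — here the cleanest route is to note $\E(P\varphi)(z)(P\varphi)(z')=-\ln|1-z\bar z'|\cdot(-1)$, i.e. one checks the elementary identity $G_\D(z,z')+\big(-\ln|1-z\bar z'|\big)=\ln|z-z'|^{-1}$, which is exactly \eqref{diri} rearranged, so the harmonic part must supply $-\ln|1-z\bar z'|=\ln|1-z\bar z'|^{-1}$; the series sums to precisely this. The same series identity handles $z,z'\in\D^c$ by the reflection symmetry \eqref{Paction2} and $\Phi_{\D^c}\stackrel{law}=\theta\Phi_\D$, together with the behaviour of $k$ off $\D$.

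Next I would handle the mixed case $z\in\D$, $z'\in\D^c$. Now $\Phi_\D(z)$ is independent of everything touching $z'$ and vice versa, so only the $P\varphi$ cross term survives: I need $G(z,z')=\E(P\varphi)(z)(P\varphi)(z')$. On $\D^c$ one has $(P\varphi)(z')=(P\varphi)(1/\bar z')$ by \eqref{Paction2}, and $1/\bar z'\in\D$, so this is again covered by the $\D\times\D$ series computation evaluated at $(z,1/\bar z')$, giving $\ln|1-z/z'|^{-1}=\ln|z'|-\ln|z'-z|$. It remains to check this matches $G(z,z')=\ln|z-z'|^{-1}-k(z)-k(z')=\ln|z-z'|^{-1}-0-\ln|z'|^{-1}=\ln|z-z'|^{-1}+\ln|z'|$, using $k(z)=0$ for $z\in\D$ and $k(z')=\ln|z'|^{-1}$ for $z'\in\D^c$. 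The two expressions agree, so the mixed covariance matches.

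The only remaining point is the average-zero condition on $\partial\D$: one should check that $\Phi_\D+P\varphi+\Phi_{\D^c}$, restricted to $S^1$, has zero mean so that it is consistent with $\Phi=X_{\hat g}-m_{\partial\D}(X_{\hat g})$; this follows because $\Phi_\D,\Phi_{\D^c}$ vanish on $\partial\D$ and $P\varphi|_{\partial\D}=\varphi$ has zero average by construction. I expect the main (though still routine) obstacle to be bookkeeping the logarithmic terms and signs in the mixed region and making sure the conditionally convergent boundary series is summed in a legitimate way — this is best done by first regularizing (mollifying) the fields, computing with the resulting absolutely convergent series, and passing to the limit, exactly as in the treatment of the chaos measures earlier; no genuinely new idea is needed beyond the Fourier/harmonic-extension computation.
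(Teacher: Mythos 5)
Your proposal is correct and takes essentially the same route as the paper: compute the covariance of the harmonic extension $P\varphi$ from the Fourier modes (the correct normalization is $\E\,\varphi_n\bar\varphi_m=\frac{1}{2n}\delta_{nm}$, not $\frac{1}{4n}$ as you first state, though your series does use the right value) and verify region by region that the covariances of the three independent centered Gaussian summands add up to $G(z,z')=\ln|z-z'|^{-1}-k(z)-k(z')$. The extra remarks about the zero boundary average and mollifying to justify the series are harmless but unnecessary, since for $z,z'$ off the circle the series converges absolutely and equality of covariance kernels already determines the law of centered Gaussian fields.
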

\begin{proof} We have 
\begin{align*}
\E\varphi_n\varphi_m=\frac{1}{2|n|}\delta_{n,-m}
\end{align*}
so that for $z,u\in\D$
\begin{align*}
\E P\varphi(z)P\varphi(u)=\hf\sum_{n>0}((z\bar u)^n+(\bar z u)^n)=-\ln|1-z\bar u|
\end{align*}
and then for $z\in\D$, $u\in\D^c$
\begin{align*}
\E P\varphi(z)P\varphi(u)=-\ln|1-z/ u|=\ln |z-u|^{-1}-\ln|u|^{-1}
\end{align*}
It is then straightforward to check the equality of covariances.
\end{proof}
Using this decomposition we then get for  $F,G\in  {\cal{F}}_\D$:
\begin{align}
\langle  (\Theta F)G\rangle_0&=\int e^{-2Qc}\E (\Theta  F)(c+\Phi_{\D^c}+P\varphi+2Qk)G(c+\Phi_{\D}+P\varphi)dc\nonumber\\
&=\int e^{-2Qc}\E_\varphi(\overline{ \E_\D F(c+\Phi_\D+P\varphi)}\E_\D G(c+\Phi_{\D}+P\varphi))dc.\label{1}
\end{align}
Hence
\begin{align}
\langle  (\Theta F) F\rangle_0\geq 0.\label{1cor}
\end{align}
The Proposition \ref{OS} follows then from

\begin{lemma}Let $I(X):= \int_\D  e^{\gamma \phi(z)}dz$. Then
  \begin{align}\label{factor}
  \int_\C  e^{\gamma \phi}dz=I(\phi)+(\Theta I)(\phi)
 \end{align}
\end{lemma}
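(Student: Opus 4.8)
\medskip
\noindent\emph{Proof plan.} I would work with the representation $\phi=c+\Phi+2Qk$ of \eqref{FL0}, where $k(x)=\ln\frac{1}{|x|}\mathbf{1}_{\{|x|\geq1\}}$, and recall $Q=2/\gamma+\gamma/2$, i.e. $2Q\gamma=4+\gamma^2$ --- the algebraic shadow of the fact that $e^{\gamma\phi}$ has conformal weight $\Delta_\gamma=1$, so that $e^{\gamma\phi}dz$ ought to transform as a density. Since $\phi$ is real valued the conjugation in \eqref{lexpedefi} is inactive, so $\Theta I(\phi)=I(\theta\phi-2Q\ln|\cdot|)$; and $\int_\C e^{\gamma\phi}dz=I(\phi)+\int_{\D^c}e^{\gamma\phi}dz$ by the definition of $I$. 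Thus the lemma reduces to the single identity $\int_{\D^c}e^{\gamma\phi}dz=I(\theta\phi-2Q\ln|\cdot|)$, which I would prove by the change of variables $z=\theta(w)=1/\bar w$ combined with the conformal covariance of Gaussian multiplicative chaos.

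First I would record the elementary data attached to the anti-conformal involution $\theta$: it maps $\D$ onto $\D^c$, its conformal factor is $|\theta'(w)|=|w|^{-2}$ (so Lebesgue measure transforms by the factor $|w|^{-4}$), and $k(\theta(w))=\ln|w|$ for $w\in\D$. Expanding $\theta\phi(z)-2Q\ln|z|=c+\Phi(1/\bar z)+2Qk(1/\bar z)-2Q\ln|z|$ and using $k(1/\bar z)=\ln|z|\,\mathbf{1}_{\{|z|\leq1\}}$, the two $2Q\ln|z|$ terms cancel on $\D$, so that the field $\theta\phi-2Q\ln|\cdot|$ agrees on $\D$ with $c+\Phi\circ\theta$ (off $\bar\D$ it differs only by the function $-2Q\ln|\cdot|\,\mathbf{1}_{\{|\cdot|\geq1\}}$). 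Hence $I(\theta\phi-2Q\ln|\cdot|)=e^{\gamma c}\int_\D e^{\gamma\Phi\circ\theta}dz$, the chaos of the pulled-back field $\Phi\circ\theta$ on the disc. On the other side I would write $\int_{\D^c}e^{\gamma\phi}dz=e^{\gamma c}\int_{\D^c}e^{2Q\gamma k(z)}e^{\gamma\Phi(z)}dz$ and apply the chaos change-of-variables rule --- this is Proposition~\ref{mobiusforchaos}, whose proof carries over with only cosmetic changes to the reflection $\theta$ and the field $\Phi$ --- to obtain
$$
\int_{\D^c}e^{2Q\gamma k(z)}e^{\gamma\Phi(z)}dz=\int_\D e^{2Q\gamma k(\theta(w))}\,e^{\gamma\Phi(\theta(w))}\,|\theta'(w)|^{2+\frac{\gamma^2}{2}}\,dw=\int_\D |w|^{\,2Q\gamma-4-\gamma^2}\,e^{\gamma\Phi(\theta(w))}\,dw .
$$
Since $2Q\gamma-4-\gamma^2=0$ the powers of $|w|$ disappear, the right-hand side equals $\int_\D e^{\gamma\Phi\circ\theta}\,dw$, and multiplying by $e^{\gamma c}$ yields $\int_{\D^c}e^{\gamma\phi}dz=I(\theta\phi-2Q\ln|\cdot|)$, the identity we needed.

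The one step that requires genuine care is the appeal to Proposition~\ref{mobiusforchaos}, which is stated for M\"obius maps and the sphere field $X_g$, whereas here $\theta$ is anti-holomorphic and $\Phi$ has covariance $\ln|z-z'|^{-1}-k(z)-k(z')$. Both extensions are routine: the proof of that proposition uses only the rotation invariance of the mollifier $\rho_\epsilon$ --- which still gives $\Phi_\epsilon(\theta(w))=(\Phi\circ\theta)_{\epsilon/|\theta'(w)|}(w)$ up to a negligible error, since near any point $\theta$ is to leading order a dilation composed with a rigid motion, under which the rotation-invariant mollifier is unchanged --- and the uniqueness of the chaos, which is insensitive to a locally bounded continuous perturbation of the logarithm. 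One should also note, when matching $I(\theta\phi-2Q\ln|\cdot|)$ with $e^{\gamma c}\int_\D e^{\gamma\Phi\circ\theta}$, that the $\epsilon$-collar of $\partial\D$ contributes $o(1)$ to both regularized integrals, the two fields there differing by $O(\epsilon)$ while the chaos mass of the collar is $O(1)$. Everything else --- the Jacobian of $\theta$, the value of $k\circ\theta$, and the cancellation $2Q\gamma=4+\gamma^2$, which is precisely where the shift $-2Q\ln|z|$ in \eqref{lexpedefi} and the term $2Qk$ in $\phi$ conspire --- is immediate.
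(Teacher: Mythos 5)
Your proposal is correct and follows essentially the same route as the paper: the paper's proof likewise invokes the change-of-variables formula for the chaos under the inversion $\theta$ (proved ``in the same way as Proposition~\ref{mobiusforchaos}''), obtaining $(e^{\gamma\theta\phi})(z)=|z|^{\gamma^2}(e^{\gamma\phi})(1/\bar z)$, and then cancels the factor $|z|^{\gamma^2-2\gamma Q}=|z|^{-4}$ against the Jacobian of $z\mapsto 1/\bar z$. You merely unpack the same computation through the explicit decomposition $\phi=c+\Phi+2Qk$, which adds detail but no new idea.
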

\begin{proof} In the same way as Proposition \ref{mobiusforchaos} one proves the change of variables
formula
$$
 (e^{\gamma \theta \phi})(z)=
 |z|^{\gamma^2}(e^{\gamma \phi})(1/\bar z)
$$
and thus
$$
(\Theta I)(\phi)=\int_{\D} |z|^{\gamma^2}(e^{\gamma \phi }(1/\bar z))e^{-2\gamma Q\ln|z| }dz=
\int_{\D} |z|^{-4}(e^{\gamma \phi}  (1/\bar z))dz=\int_{\D^c} e^{\gamma \phi(z)}dz
$$
\end{proof}

\subsection{Hilbert space}

The  Liouville Hilbert space $\caH$ is defined as the completion of  $\caF_\D/\caN$ where
$$\caN=\{F\in\caF_\D|(F,F)=0\}.$$
 Then
  \begin{align}\label{scalar}
(F,G)=( UF, UG)_{ L^2(d\P(\varphi)dc)}
 \end{align}
 where
 \begin{align}\label{Utilde}
 UF:=e^{-Qc}\E_\D e^{-\mu \int_\D  e^{\gamma X}dz}F
 \end{align}
 Thus 
 $$ U:\caH\to L^2(d\P(\varphi)dc)
 $$ is an isometry and we may identify $\caH$ with  a subspace of $ L^2(d\P(\varphi)dc)$.

\subsection{Q-Free Field}

Let us consider the $\mu=0$ case in more detail. This is the Free field with "background charge" $iQ$.
We may realize $ L^2(d\P(\varphi)\, dc)$ as
$$
d\P(\varphi)=\prod_{n>0}\frac{_1}{^{2\pi}}e^{-\hf(\alpha_n^2+\beta_n^2)}d\alpha_nd\beta_n.
$$
We have then 
\begin{proposition} \label{Ufree} The map 
$$
(UF)(c,\varphi)=e^{-Qc}\E_\D F(c+\Phi_D+P\varphi).
$$
extends to  a unitary map  $U:\caH\to L^2(d\P(\varphi)\, dc)$.
\end{proposition}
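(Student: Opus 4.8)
The plan is to show that $U$ is a densely defined isometry whose range is \emph{dense}; since an isometry between Hilbert spaces has closed range, this forces the range to be everything, i.e. $U$ is unitary. That $U$ is isometric is already essentially contained in \eqref{scalar}--\eqref{1} specialized to $\mu=0$: using $\Phi\stackrel{law}{=}\Phi_\D+P\varphi+\Phi_{\D^c}$ and $\Phi_{\D^c}\stackrel{law}{=}\theta\Phi_\D$ one gets, for $F\in\caF_\D$,
\[
(F,F)=\langle(\Theta F)F\rangle_0=\int e^{-2Qc}\,\E_\varphi\big|\E_\D F(c+\Phi_\D+P\varphi)\big|^2\,dc=\|UF\|_{L^2(d\P(\varphi)\,dc)}^2 ,
\]
and polarization upgrades this to $(F,G)=(UF,UG)_{L^2(d\P(\varphi)\,dc)}$. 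Hence $U$ extends to an isometry $\caH\to L^2(d\P(\varphi)\,dc)$ with closed range, and the whole content of the proposition is that this range is dense.

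The key point is that $U$ acts \emph{trivially on functionals depending only on the boundary data of $\phi$}. Write $c=m_{\partial\D}(\phi)$ for the constant mode and $\varphi=\phi|_{\partial\D}-c$ for the zero-average boundary field; since $\Phi_\D$ vanishes on $\partial\D$ and has zero circle average, any $F$ of the form $F(\phi)=H(c,\varphi)$ satisfies $\E_\D F(c+\Phi_\D+P\varphi)=H(c,\varphi)$, so $(UF)(c,\varphi)=e^{-Qc}H(c,\varphi)$. Taking $H(c,\varphi)=g(c)\,e^{i(\varphi,f)}$ with $g\in C_c^\infty(\R)$ and $f$ a real test function, one has $\|F\|^2=(F,F)=\int e^{-2Qc}|g(c)|^2\,dc\cdot\E|e^{i(\varphi,f)}|^2<\infty$, so $F\in\caH$ and $UF=e^{-Qc}g(c)\,e^{i(\varphi,f)}$. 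Now $\{e^{-Qc}g(c):g\in C_c^\infty(\R)\}$ is dense in $L^2(\R,dc)$ (smoothed truncations of $e^{Qc}h$ approximate any $h\in L^2(dc)$), while $\{e^{i(\varphi,f)}\}$ spans a dense subspace of $L^2(d\P(\varphi))$; hence the linear span of these $UF$ is dense in $L^2(\R,dc)\otimes L^2(d\P(\varphi))=L^2(d\P(\varphi)\,dc)$. Combined with the first paragraph, $U$ is a surjective isometry, i.e. unitary.

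The step requiring genuine care is the claim that $F(\phi)=H(c,\varphi)$ truly represents an element of $\caH=\overline{\caF_\D/\caN}$: since $\caF_\D$ is generated by pairings $\int\phi f$ with $\mathrm{supp}\,f$ inside the \emph{open} disk, such an $F$ is a priori measurable only with respect to the boundary circle. One remedies this by approximating $\partial\D$ by the circles $\{|z|=r\}$, $r\uparrow1$: the functionals $H\big(m_{\{|z|=r\}}\phi,\ \phi|_{\{|z|=r\}}-m_{\{|z|=r\}}\phi\big)$ are genuinely $\caF_\D$-measurable, and they converge to $F$ in the Hilbert norm $(\cdot,\cdot)^{1/2}$ because $\Phi_\D|_{\{|z|=r\}}\to0$ and $m_{\{|z|=r\}}(\Phi_\D)\to0$ in $L^2$ as $r\uparrow1$; equivalently one checks directly that the images of the approximants converge to $e^{-Qc}H(c,\varphi)$ in $L^2(d\P(\varphi)\,dc)$. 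The remaining inputs — density of $\{e^{-Qc}g\}$ in $L^2(dc)$, density of Gaussian exponentials in $L^2(d\P(\varphi))$, and the tensor-product factorization of $L^2(d\P(\varphi)\,dc)$ — are standard, so the only real work is this boundary-measurability bookkeeping.
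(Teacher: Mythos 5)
Your isometry step coincides with the paper's (it is exactly \eqref{1} with $G=F$, plus polarization, and the closed-range observation is how the paper also converts density into surjectivity), but your surjectivity argument takes a genuinely different route. The paper evaluates $U$ on functionals $F(\phi)=\psi\big(\frac{1}{2\pi}\int\phi(e^{i\theta})d\theta\big)\,e^{(\phi,f)}$ with $f$ supported in the disk, uses the Gaussian identity $\E_\D e^{(\Phi_\D+P\varphi,f)}=e^{\frac{1}{2}(f,G_\D f)}e^{(\varphi,\pi(f))}$ to get $UF=\psi(c)e^{((1,f)-Q)c}e^{\frac{1}{2}(f,G_\D f)}e^{(\varphi,\pi(f))}$, and then asserts that as $f$ varies the moment sequences $\pi(f)_n=\int_\D z^n f$, $\pi(f)_{-n}=\int_\D \bar z^n f$ produce a family whose span is dense in $L^2(d\P(\varphi)\,dc)$. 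You instead feed in functionals of the boundary data only, $F=H(c,\varphi)$, on which $U$ acts by mere multiplication by $e^{-Qc}$, so density of the range follows at once from the standard facts you quote; the price is the measurability/approximation step (circle data at radius $r\uparrow 1$, with the Dirichlet field's contribution vanishing) needed to realize such $F$ inside $\caH$, which you correctly identify and sketch. That bookkeeping is genuine but harmless — note the paper itself silently uses the boundary circle average through $\psi(c)$, so it faces the same issue for the $c$-variable — while the paper's route keeps the $\varphi$-dependence in the bulk, at the cost of the explicit Gaussian computation and an implicit argument about which sequences $\pi(f)$ actually occur. Both proofs are valid and end identically: an isometry with dense range is unitary.
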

\begin{proof} 
By \eqref{1} $U$  is an isometry  from $\caF_\D$ to a subspace of $ L^2(d\P(\varphi)\, dc)$. To show $U$ is onto note that $c=\frac{1}{2\pi}\int \phi(e^{i\theta})d\theta$ and 
consider $F$ of the form
\begin{align}
F(\phi)=\psi(\int \phi(e^{i\theta})\frac{_{d\theta}}{^{2\pi}})
{e^{( \phi,f)}}=\psi(c)e^{c( 1,f)}{e^{( \Phi,f)}}\label{separable}
\end{align}
where $f\in C_0^\infty(\C)$ and we use in this Chapter the notation 
$( \phi,f):=\int_\D \phi f$. 
Then
\begin{align}
(UF)(c,\varphi))=\psi(c)e^{-Qc}e^{c( 1,f)+\hf( f,G_\D f)}e^{( P\varphi,f)}
\label{separable1}
\end{align}
From \eqref{Paction1} we get
$$
( P\varphi,f)=\sum_{n>0}(\varphi_n\int_\D\bar z^n f+\varphi_{-n}\int_\D z^n f).
$$
so that by \eqref{0}
\begin{align}
( P\varphi,f)=\sum_{n\neq 0}\varphi_n\pi(f)_{-n}:=(\varphi,\pi(f))
\end{align}
where we defined for $n>0$:
\begin{align}
\pi(f)_n=\int_\D z^n f, \ \ \ \pi(f)_{-n}=\int_\D \bar z^n f
\end{align}
and we denote the scalar product in $L^2(\partial\D)$ also by $(\cdot,\cdot)$.
Thus
$$
UF=\psi(c)e^{(( 1,f)-Q)c}e^{
\hf( f,G_\D f)}e^{( \varphi,\pi(f))}.
$$
The linear span of such functions is dense in  $ L^2(d\P(\varphi)\, dc)$.
\end{proof}
In particular, for the vertex operator
$$
V_\alpha(z)=e^{\alpha \phi(z)-\hf\alpha^2 \E\Phi(z)^2}
$$
we get
\begin{align}
UV_\alpha(z)
=e^{(\alpha-Q)c}:e^{\alpha (P\varphi)(z)}:.\label{vertex}
\end{align}
where 
$$
:e^{\alpha (P\varphi)(z)}:=e^{\alpha (P\varphi)(z)-\hf\alpha^2\E (P\varphi)(z)^2}=(1-|z|^2)^{\hf\alpha^2}e^{\alpha (P\varphi)(z)}
$$

\subsection{Hamiltonian of Q-free field }
From now on we use for $\phi$  the representation
$$
\phi=
c+\Phi+2Qk.
$$
For $q\in\C$ define dilation
$$
(s_qf)(z)=f(qz)
$$
and
$$
S_qF(\phi)=F(s_q\phi+Q\log|q|).
$$
Hence $S_q:\caF_\D\to\caF_\D$ if $|q|\leq 1$. We have
\begin{proposition}  The adjoint  of $S_q$ is $S_q^\ast=S_{\bar q}$ i.e.
$$
(F,S_qG)=(S_{\bar q}F,G).
$$
\end{proposition}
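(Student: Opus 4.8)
The claim is that $S_q^\ast = S_{\bar q}$ on the Liouville Hilbert space, i.e. $(F, S_qG) = (S_{\bar q}F, G)$ for $F,G \in \caF_\D$. The plan is to unwind both sides using the definition $(F,G) = \langle \Theta F\, G\rangle$ and the explicit action of $S_q$, reducing the identity to a change of variables in the underlying Gaussian integral plus a computation with the reflection $\Theta$. The key observation is that $\Theta$ and $S_q$ interact in a controlled way: since $\theta(z) = 1/\bar z$ and $S_q$ acts by $F\mapsto F(s_q\phi + Q\log|q|)$ with $(s_q\phi)(z) = \phi(qz)$, one should expect a relation of the form $\Theta \circ S_q = S_{\bar q}^{\;(\D^c)} \circ \Theta$ at the level of the maps, where the dilation on the right now acts on fields supported in $\D^c$. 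Concretely, $\theta s_q \theta$ conjugates the dilation $z\mapsto qz$ to the dilation $z \mapsto z/\bar q$, so $\theta(s_q\phi)(\theta z) = \phi(q/\bar z) = (s_{\bar q}\phi)(1/\bar z)$, and tracking the additive $Q\log$-terms through $\Theta$ (which inserts $-2Q\log|z|$) should produce exactly the shift that matches $S_{\bar q}$ acting under the reflected integral.

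**Main steps.** First I would write $(F, S_qG) = \langle (\Theta F)\,(S_q G)\rangle$, and expand $\langle \cdot\rangle$ as in \eqref{FL1}, i.e. $\int dc\, e^{-2Qc}\, \E[\,\cdot\,] e^{-\mu\int_\C e^{\gamma\phi}dz}$ with $\phi = c + \Phi + 2Qk$. Since $\Theta F \in \caF_{\D^c}$ depends on $\phi$ only through its values in $\D^c$ and $S_qG \in \caF_\D$ (for $|q|\le 1$) through values in $\D$, I would use the independent decomposition $\Phi \stackrel{law}{=} \Phi_\D + P\varphi + \Phi_{\D^c}$ to write the expectation as $\E_\varphi \big[\,\E_{\D^c}(\Theta F)(\cdots)\cdot \E_\D (S_qG)(\cdots)\,\big]$. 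Second, on the $\D$-side I would make the change of variables in the Gaussian field $\Phi_\D$ induced by the dilation $z\mapsto qz$: because $\Phi_\D$ is the Dirichlet GFF and $P\varphi$ the harmonic extension, scaling acts cleanly here (this is the analogue of the scaling computation in the proof of Proposition \ref{mobiusforchaos}), and it converts $\E_\D (S_qG)$ back into $\E_\D G$ at the cost of transforming the boundary data $\varphi$ and the $c$-variable in a prescribed way. Third, I would check that the transformation of $\varphi$ and $c$ produced on the $\D$-side is precisely the one that, when transported to the $\D^c$-side via $\Theta$, realizes $S_{\bar q}$ acting on $F$ before applying $\Theta$ — this is where the identity $\Theta S_q = (\text{reflected }S_{\bar q})\,\Theta$ does the work. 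Reassembling gives $\langle (\Theta S_{\bar q}F)\, G\rangle = (S_{\bar q}F, G)$.

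**Alternative, cleaner route.** Rather than grinding through the three-fold decomposition, it may be slicker to establish the operator identity $S_q^\ast = S_{\bar q}$ directly on the dense subspace of "separable" functions $F(\phi) = \psi(c)e^{(\phi,f)}$ used in Proposition \ref{Ufree}, via the isometry $U$. One computes $U S_q F$ explicitly: $S_q$ replaces $f$ by its pullback under $s_q$ (up to Jacobian) and shifts $c$, so $US_qF$ has the same Gaussian-exponential form with $\pi(f)$ replaced by $\pi(s_q^\ast f)$ and a multiplicative factor from $\hf(f,G_\D f)$. One then verifies $\langle US_qF, UG\rangle_{L^2} = \langle UF, US_{\bar q}G\rangle_{L^2}$ by matching the $c$-Gaussian-shift factors and the boundary pairings term by term. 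The arithmetic is routine once the pieces are laid out.

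**Expected main obstacle.** The delicate point is bookkeeping the additive $\log$-corrections: $S_q$ carries a $+Q\log|q|$ shift in $\phi$, $\Theta$ carries a $-2Q\log|z|$ shift, and $\phi$ itself contains the $2Qk$ term with $k(x) = \log\frac1{|x|}\mathbf 1_{|x|\ge1}$, which is not scale-covariant (the indicator moves under dilation). Verifying that all these shifts — together with the $e^{-2Qc}$ weight and the Girsanov/Cameron–Martin factors from the Gaussian change of variables — conspire to leave exactly $S_{\bar q}$ on the other side, with no residual anomaly, is the crux. I would handle the non-covariance of $k$ by noting that the discrepancy $2Q(k(qz) - k(z))$ is a fixed (deterministic, $\varphi$-independent once restricted appropriately) function on $\D$, hence absorbable into the $c$-shift and the measure, and check that its contribution cancels against the matching term on the reflected side.
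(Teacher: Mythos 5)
Your main route has a genuine gap at its central step. You claim that, after conditioning on the boundary field and using $\Phi\stackrel{law}{=}\Phi_\D+P\varphi+\Phi_{\D^c}$, the dilation can be undone on the $\D$-side because "scaling acts cleanly" on the Dirichlet GFF, converting $\E_\D(S_qG)$ back into $\E_\D G$ at the cost of only transforming $\varphi$ and $c$. This is false: for $|q|<1$ the map $z\mapsto qz$ is not an automorphism of $\D$, the Dirichlet covariance \eqref{diri} satisfies $G_\D(qz,qz')\neq G_\D(z,z')$, and the dilated field lives on $q\D$, so no redefinition of the boundary data and the zero mode reproduces the law of $\Phi_\D+P\varphi$. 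The discrepancy is carried by the annulus $\D\setminus q\D$ — its Gaussian field and, for $\mu>0$, its chaos/interaction — and this contribution is exactly why $S_{e^{-t}}$ generates a nontrivial contraction semigroup $e^{-tH}$ rather than something conjugate to a boundary-data transformation. Your alternative route (explicit verification on the separable functions of Proposition~\ref{Ufree} via $U$) is workable for the free field $\mu=0$, in the spirit of the computation in Proposition~\ref{hamilton}, but it does not cover the form $(F,G)=\langle\Theta F\,G\rangle$ built from the Liouville expectation \eqref{FL1} with $\mu>0$ (later reused as \eqref{symL}), since there is no explicit formula for the interacting $U_LF$; and even in the free case you only assert the term-by-term matching rather than carry it out.

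The idea you are missing is that the work has already been done globally: the paper's proof simply invokes the M\"obius invariance of the expectation (Corollary~\ref{mobius}, established earlier via the Weyl anomaly and Girsanov), applied to the dilation $\psi(z)=qz$, to move $s_q$ off $G$ in one stroke, interaction included: $\langle(\Theta F)(\phi)\,G(s_q\phi+Q\log|q|)\rangle=\langle(\Theta F)(s_{q^{-1}}\phi-Q\log|q|)\,G(\phi)\rangle$. After that only algebra remains: the conjugation $\theta\circ s_{q^{-1}}=s_{\bar q}\circ\theta$ (your conjugation statement $\theta s_q\theta=s_{1/\bar q}$ is the right one, though your displayed check of it contains a conjugation slip), the definition \eqref{lexpedefi} of $\Theta$, and the bookkeeping of the $\log$-terms, which closes because $\Theta S_{\bar q}F$ is supported in $q^{-1}\D^c$ and $1_{q^{-1}\D^c}(k-s_qk)=\log|q|$ — precisely the "residual anomaly" cancellation you flagged as the crux, but which you proposed to verify by hand inside a Gaussian change of variables instead of reading it off from the already-proven invariance. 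If you want to salvage your approach, either reduce to Corollary~\ref{mobius} as the paper does, or restrict the claim to $\mu=0$ and complete the explicit $L^2(d\P(\varphi)\,dc)$ computation honestly.
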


\begin{proof} 
By M\"obius invariance of the Liouville expectation, Corollary \ref{mobius}, we get
\begin{align*}
(F,S_qG)&=\langle (\Theta F)(\phi)(S_qG)(\phi)\rangle=\langle (\Theta F)(\phi)G(s_q\phi+Q\log|q|)\rangle=\langle (\Theta F)(s_{q^{-1}}\phi-Q\log|q|)G(\phi)\rangle
\end{align*}
By  the definition of $\Theta$, \eqref{lexpedefi} we may write this as
\begin{align}
(F,S_qG)
&=\langle \bar F(\theta(s_{q^{-1}}\phi-2Qk)-Q\log|q|)G(\phi)\rangle\nonumber\\
&=\langle \bar F(s_{\bar q}(\theta(\phi-2Qs_qk)-Q\log|q|)G(\phi)\rangle\nonumber\\&=\langle (\Theta S_{\bar q} F)(\phi-2Qs_qk+2Qk-2Q\log|q|)G(\phi)\rangle\nonumber\\&=(S_{\bar q}F,G)\label{sym}
\end{align}
since $\Theta S_{\bar q} F$ is supported in $q^{-1}\D^c$ and $1_{q^{-1}\D^c}(k-s_qk)=\log|q|$.
\end{proof} 
$S_q$ gives rise two semi groups. Taking $q=e^{-t}$ we define
$$
US_{e^{-t}}F=e^{-tH}UF.
$$
This is a contraction semigroup with generator $H\geq 0$, the {\it Hamiltonian} of the GFF. Taking $q=e^{i\alpha}$ we define
$$
US_{e^{i\alpha}}F=e^{i\alpha P}UF.
$$
where $P$ is the {\it momentum} operator of the GFF. It is a generator of an unitary group. To compute them explicitly we use the complex coordinates $\{\varphi_n\}_{n\in\Z}$  in $L^2(\P(d\varphi))$ given in the representation \eqref{0}
 and 
define  for $n>0$:
\begin{align*}
a_n=\hf\frac{\partial}{\partial\varphi_{-n}},\ \ \ 
a_{-n}=n\varphi_{-n}-\hf \frac{\partial}{\partial\varphi_{n}}
\end{align*}
and
\begin{align*}
\tilde a_n=\hf\frac{\partial}{\partial\varphi_{n}},\ \ \ 
\tilde a_{-n}=n\varphi_{n}-\hf \frac{\partial}{\partial\varphi_{-n}}.
\end{align*}
$a_n$ and $\tilde a_n$ are ($n>0$) called the {\it  annihilation operators}  (for analytic and anti analytic modes) and $a_{-n}$ and $\tilde a_{-n}$ the {\it creation operators}. They are densely defined closable operators in  $L^2(\P(d\varphi))$ and their closures satisfy $a_n^\ast=a_{-n}$, $\tilde a_n^\ast=\tilde a_{-n}$. Furthermore we have
 $a_n1=0$ and $\tilde a_n1=0$ for  $n>0$ and we have the commutation relations
\begin{align*}
[a_n,
a_{m}]=\frac{_n}{^2}\delta_{n,-m}=[\tilde a_n,
\tilde a_{m}],\ \ 
[a_n,\tilde a_m]=0.
\end{align*}

\begin{proposition} \label{hamilton} We have
\begin{align*}
H&=\hf(-\frac{d^2}{dc^2}+Q^2)+2\sum_{n>0}(a_{-n}a_n+\tilde a_{-n}\tilde a_n)\\
P&=2\sum_{n>0}(a_{-n}a_n-\tilde a_{-n}\tilde a_n)
\end{align*}

\end{proposition}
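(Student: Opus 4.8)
The plan is to compute the operators $US_{e^{-t}}U^{-1}$ and $US_{e^{i\alpha}}U^{-1}$ on $L^2(d\P(\varphi)\,dc)$ explicitly on the total set $\{UF\}$ with $F$ of the separable form \eqref{separable} — for which, by \eqref{separable1}, $UF(c,\varphi)=\rho(c)\,e^{(\varphi,\pi(f))}$ is a product of a function of $c$ with an exponential in $\varphi$ — and then to read off $H=-\frac{d}{dt}|_{t=0}\,US_{e^{-t}}U^{-1}$ and $P=\frac1i\frac{d}{d\alpha}|_{\alpha=0}\,US_{e^{i\alpha}}U^{-1}$ as differential operators, matching them against the action of $\partial_c,a_n,\tilde a_n$ on these products. (That these are genuine one-parameter (semi)groups follows from $S_qS_{q'}=S_{qq'}$ together with the adjoint relation $S_q^\ast=S_{\bar q}$ proved above.) The momentum is the easy half: for $q=e^{i\alpha}$ the additive term vanishes, $s_{e^{i\alpha}}\phi=\phi\circ(e^{i\alpha}\,\cdot\,)$, and by \eqref{Paction1} this sends $\varphi_n\mapsto e^{-in\alpha}\varphi_n$, leaves $c$ fixed, and — being measure preserving for $\Phi_\D$ — can be absorbed by a change of variables in $\E_\D$; so $US_{e^{i\alpha}}U^{-1}$ is the rotation $g(c,\varphi)\mapsto g(c,\{e^{-in\alpha}\varphi_n\}_n)$, and differentiating $e^{(\varphi,\pi(f))}$ and comparing with $a_ne^{(\varphi,\pi(f))}=\frac12\pi(f)_n\,e^{(\varphi,\pi(f))}$, $\tilde a_ne^{(\varphi,\pi(f))}=\frac12\pi(f)_{-n}\,e^{(\varphi,\pi(f))}$ gives $P=2\sum_{n>0}(a_{-n}a_n-\tilde a_{-n}\tilde a_n)$.

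For the Hamiltonian take $q=e^{-t}$ with $t>0$, write $\phi|_\D=c+\Phi_\D+P\varphi$, and use three facts about the GFF. First, $P\varphi(e^{-t}z)=P(\varphi^{(t)})(z)$ on $\D$ with $\varphi^{(t)}_n:=e^{-|n|t}\varphi_n$. Second, the domain-Markov property of $\Phi_\D$ on $e^{-t}\D\subset\D$ together with scale invariance of the Dirichlet GFF gives $\Phi_\D(e^{-t}\,\cdot\,)|_\D\stackrel{law}{=}B_t+P(\eta_t)+\Phi'_\D$, where $B_t\sim N(0,t)$ is the circle average of $\Phi_\D$ at radius $e^{-t}$, $\eta_t$ its angular fluctuation, and $\Phi'_\D$ an independent copy of the Dirichlet GFF on $\D$, all mutually independent and independent of $\varphi$. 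Third, $\Psi_t:=\eta_t+\varphi^{(t)}$, which is the angular part of $\phi$ at radius $e^{-t}$, is a \emph{stationary} Markov process in $t$ with invariant law $\P(d\varphi)$, because the angular part of $\Phi$ has the covariance $-\log|e^{i\theta}-e^{i\theta'}|$ of $\varphi$ at every radius $\leq1$. Feeding these into the formula $(UF)(c,\varphi)=e^{-Qc}\E_\D F(c+\Phi_\D+P\varphi)$ of Proposition \ref{Ufree}, writing $(S_{e^{-t}}F)(\phi)=F(s_{e^{-t}}\phi-Qt)$, and conditioning the $\Phi_\D$-average on $(B_t,\Psi_t)$, one obtains
\begin{align*}
U(S_{e^{-t}}F)(c,\varphi)=e^{-Q^2t}\,\E\!\left[\,e^{QB_t}\,(UF)\big(c-Qt+B_t,\ \Psi_t\big)\,\right],
\end{align*}
and, since $B_t\perp\Psi_t$, this factorizes as $T^{(c)}_t\otimes T^{(\varphi)}_t$ on $L^2(dc)\otimes L^2(\P(d\varphi))$.

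It then remains to differentiate the two factors. A one-dimensional complete-the-square (Girsanov) computation gives $T^{(c)}_t=e^{-Q^2t/2}\,e^{t\partial_c^2/2}$, whose generator is $\frac12(-\partial_c^2+Q^2)$. The factor $T^{(\varphi)}_th(\varphi)=\E[h(\Psi_t)\mid\Psi_0=\varphi]$ is the Mehler semigroup of the infinite-dimensional Ornstein--Uhlenbeck process whose $n$-th mode mean-reverts at rate $|n|$, since $\E[(\Psi_t)_n\mid\Psi_0=\varphi]=e^{-|n|t}\varphi_n$ and $\E[(\eta_t)_n(\eta_t)_{-n}]=\frac{1-e^{-2|n|t}}{2|n|}$; testing its generator on $h=e^{(\varphi,\pi(f))}$ and matching against $2\sum_{n>0}(a_{-n}a_n+\tilde a_{-n}\tilde a_n)h=\big(\sum_{n\neq0}|n|\varphi_n\pi(f)_{-n}-\sum_{n>0}\pi(f)_n\pi(f)_{-n}\big)h$ identifies the generator of $T^{(\varphi)}_t$ with $2\sum_{n>0}(a_{-n}a_n+\tilde a_{-n}\tilde a_n)$. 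As $T^{(c)}_t$ and $T^{(\varphi)}_t$ commute, $H$ is the sum of these two generators, which is the asserted formula; for $P$ the rotation computation above already gives the claim.

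The hard part will be step two of the Hamiltonian argument, i.e. turning the slogan ``dilate by $e^{-t}$, then integrate out the interior disk'' into the displayed identity for $U(S_{e^{-t}}F)$: one must combine the GFF domain-Markov property on the annulus $\{e^{-t}<|z|<1\}$, the scale invariance of the Dirichlet GFF, and the circle-average/angular splitting carefully enough that (a) the $c$- and $\varphi$-sectors decouple with no surviving cross term, and (b) the $e^{-Qc}$ twist built into $U$ produces exactly the scalar $e^{-Q^2t/2}$ with no residual drift surviving in the $c$-direction. A secondary, routine issue is to verify that the differential operators so obtained are the actual infinitesimal generators on a suitable core — e.g. the span of the $UF$ above, or finite-particle states tensored with Schwartz functions of $c$ — and are essentially self-adjoint there, which follows from standard analytic-vector arguments once the explicit form is known.
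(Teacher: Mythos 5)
Your proof is correct, but it takes a genuinely different route from the paper. The paper works with the explicit formula $US_qF=e^{-Qc}e^{(c+Q\log|q|)(1,f)+\hf(f_q,G_\D f_q)}e^{(\varphi,\pi(f_q))}$, $f_q(z)=|q|^{-2}f(z/q)$, expands $(f_q,G_\D f_q)$ via the Dirichlet Green function \eqref{diri}, and takes the holomorphic and antiholomorphic derivatives $\partial_q|_{q=1}$ and $\partial_{\bar q}|_{q=1}$ separately; $H$ and $P$ are then the sum and difference of the two resulting operators (this is really a computation of $L_0$ and $\tilde L_0$). You instead treat the radial and angular semigroups separately and probabilistically: rotations act as $\varphi_n\mapsto e^{-in\alpha}\varphi_n$ by rotation invariance of $\Phi_\D$, while for dilations you use the domain Markov property on $e^{-t}\D\subset\D$ plus scale invariance of the Dirichlet GFF to write $\Phi_\D(e^{-t}\,\cdot\,)|_\D\stackrel{law}{=}B_t+P\eta_t+\Phi_\D'$ and obtain $US_{e^{-t}}F=e^{-Q^2t}\,\E\bigl[e^{QB_t}(UF)(c-Qt+B_t,\Psi_t)\bigr]$, identifying $e^{-tH}$ as (drifted, $e^{QB_t}$-twisted heat semigroup in $c$) tensored with the Mehler semigroup of an Ornstein--Uhlenbeck process with rate $|n|$ per mode. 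I checked the key claims: $B_t\sim N(0,t)$ and is independent of $\eta_t$ and $\varphi$; $\E[(\eta_t)_n(\eta_t)_{-n}]=\frac{1-e^{-2|n|t}}{2|n|}$ so that $\Psi_t=\eta_t+\varphi^{(t)}$ has law $\P(d\varphi)$; the Girsanov shift turns $e^{-Q^2t}\E[e^{QB_t}\rho(c-Qt+B_t)]$ into $e^{-Q^2t/2}e^{t\partial_c^2/2}\rho$; and testing both generators on $e^{(\varphi,\pi(f))}$ reproduces exactly the paper's multiplication operators $\sum_{n\neq0}|n|\varphi_n\pi(f)_{-n}-\sum_{n>0}\pi(f)_n\pi(f)_{-n}$, etc. What your approach buys is a structural identification of $H$ as the generator of an explicit Markov process (the radial-quantization picture, consistent with the Feynman--Kac section later in the paper), at the cost of not producing $L_0$ and $\tilde L_0$ individually in one stroke; the paper's Green-function computation is shorter but more opaque. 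The one caveat you already flag yourself — that the formal generator computed on the exponential core is the actual self-adjoint generator — is equally unaddressed in the paper's proof, so it is not a gap relative to the standard you are being held to.
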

\begin{proof} 

It suffices to compute
$$
US_qF=e^{-Qc}\E_\D F(c+s_q\Phi_\D+s_qP\varphi+Q\log|q|)
$$
for 
$$F=e^{(\phi,f).
}
$$
We get 
$$US_qF=e^{-Qc}e^{(c+Q\log|q|)( 1,f)}
\E_\D{e^{( \Phi_\D+P\varphi ,f_q)}}=e^{-Qc}e^{(c+Q\log|q|)( 1,f)+\hf (f_q,G_\D f_q)}e^{(\varphi,\pi(f_q))}.
$$ with 
$$f_q(z)=|q|^{-2}f(z/q).$$ 
Now
 from \eqref{diri}
$$
(f_q,G_\D f_q)=(f,G_\D f)-\ln |q|(1,f)^2+\int \log\frac{|1-\bar q qz\bar z'|}{|1-z\bar z'|}f(z)f(z')dzdz'.
$$
We compute next $\partial_qUS_qF$ and $\partial_{\bar q}US_qF$ at $q=1$. First
\begin{align*}
\partial_q|_{q=1}(f_q,G_\D f_q)&=-\hf(1,f)^2-\hf\int (\frac{ z \bar z'}{1-z\bar z'}+\frac{z' \bar z}{1-\bar z z'})f(z)f(z')dzdz'\\&
=-\hf(1,f)^2-\sum_{n>0}\pi(f)_n\pi(f)_{-n}.
\end{align*}
Next,
$$
\pi(f_q)_n=\int_\D z^nf_q=q^n\pi(f)_n, \ \ n>0
$$
and $$
\pi(f_q)_n=\int_\D \bar{z}^{-n}f_q=\bar q^{-n}\pi(f)_n, \ \ n<0
$$
so that 
$$
\partial_q|_{q=1}\pi(f_q)_n=n\pi(f)_n1_{n>0}.
$$
Altogether we get
\begin{align*}
\partial_q|_{q=1}US_qF&=(\hf Q(1,f)-\frac{_1}{^4} (1,f)^2+\sum_{n>0}((n\pi(f)_n\varphi_{-n}-\hf\pi(f)_n\pi(f)_{-n})UF
\end{align*}
Using $ (\partial_c+Q)UF=(1,f)UF$ and $\frac{\partial}{\partial\varphi_{n}}UF=\pi(f)_{-n}UF$ this becomes
\begin{align*}
\partial_q|_{q=1}UF&=
\frac{_1}{^4}(- \partial_c^2+Q^2)+\sum_{n>0}(n\varphi_{-n}-\hf\frac{\partial}{\partial\varphi_{n}})\frac{\partial}{\partial\varphi_{-n}}
UF\\&
=\frac{_1}{^4}(- \partial_c^2+Q^2)+2\sum_{n>0}a_{-n}a_n
\end{align*}
In the same way, using
$$
\partial_{\bar q}|_{q=1}\pi(f_q)_n=-n\pi(f)_n1_{n<0}.
$$
we get
\begin{align*}
\partial_{\bar q}|_{q=1}UF&=
\frac{_1}{^4}(- \partial_c^2+Q^2)+\sum_{n>0}(n\varphi_{n}-\hf\frac{\partial}{\partial\varphi_{-n}})\frac{\partial}{\partial\varphi_{n}}
UF\\&
=\frac{_1}{^4}(- \partial_c^2+Q^2)+2\sum_{n>0}\tilde a_{-n}\tilde a_n
\end{align*}
\end{proof}

\subsection{Virasoro algebra}
The  Energy-Momentum  tensor field is given by
$$
T(z)=Q\partial_z^2 \phi\,  -((\partial_z \phi)^2-\E( \partial_z\phi)^2).
$$
Let  the support of $F$ be in the ball $B_r$ of radius $r<1$ centred at origin. We define for $n\in\Z$
\begin{align*}
L_nUF:&=\frac{1}{2\pi i}U\oint z^{n+1}T(z)F\\
\bar L_nUF:&=\frac{1}{2\pi i}U\oint \bar z^{n+1}T(z)F
\end{align*}
where the contour is in $B_r^c$. Let us compute these operators explicitely on a dense domain. Again, it suffices to take $F=e^{(\phi,f)}$ with $f$ supported in $B_r$. We compute first $U\partial_z\phi(z)F$. First note that
\begin{align*}
U((\phi,g)F)&=\partial_\lambda|_0Ue^{(\phi,f+\lambda g)}=(c(1,g)+(P\varphi,g)+(g,G_\D f))UF.
\end{align*}
Now take $g=-\partial\delta_z$ so that
 \begin{align*}
U\partial \phi(z)F&=(\partial P\varphi(z)+\partial (G_\D f)(z))UF.
\end{align*}
From \eqref{diri} 
 we get
\begin{align*}
\partial_zG_\D (z,u)=-\hf(\frac{1}{z-u}-\frac{1}{z-\frac{1}{\bar u}})=-\hf\sum_{n=0}^\infty(u^nz^{-n-1}+\bar u^{n+1}z^n)
\end{align*}
which converges since $|u|<|z|<1$. Hence we get
\begin{align*}
\partial (G_\D f)(z)=-\hf\sum_{n=0}^\infty(z^{-n-1}\pi(f)_n+z^n\pi(f)_{-n-1})=-\hf\sum_{n\in\Z}\pi(f)_{-n-1}z^n
\end{align*}
where we defined $\pi(f)_0=(1,f)$. Since 
\begin{align*}
\partial P\varphi(z)=\sum_{n=1}^\infty nz^{n-1}\varphi_{-n}
\end{align*}
we end up with
\begin{align*}
U\partial \phi(z)F&=\sum_{n\neq 0}z^{n-1}(n\varphi_{-n}1_{n>0}-\hf\partial_{\varphi_n})UF-\hf(\partial_c+Q)UF=\sum_{n\in\Z}z^{-n-1}a_nUF
\end{align*}
provided we define
\begin{align*}
a_0=-\hf(\partial_c+Q).
\end{align*}
Next consider $U:(\partial \phi(z))^2:F$ where $:(\partial \phi(z))^2:=(\partial \phi(z))^2-\E_\D(\partial \Phi(z))^2- \E(\partial P\varphi(z))^2$. Noting that $\E(\partial P\varphi(z))^2=0$ we then compute
\begin{align*}
U:(\partial \phi(z))^2:F&=\sum z^{-n-1}z^{-m-1}(a_na_m+\hf m\delta_{n,-m})UF=\sum z^{-n-1}z^{-m-1}:a_na_m:UF
\end{align*}
where $:a_na_m:=a_na_m$ if $m>0$ and $a_ma_n$ if $n>0$ (i.e. annihilation operators are on the left). Combining we then get
\begin{align*}
L_n=-(n+1)Qa_n-\sum_{m\in\Z}:a_{n-m}a_m:.
\end{align*}
In particular we get 
\begin{align*}
L_0=-Qa_0-a_0^2-2\sum_{n>0}a_{-n}a_n=\frac{_1}{^4}(-\partial_c^2+Q^2)-2\sum_{n>0}a_{-n}a_n.
\end{align*}
Comparing with Proposition \ref{hamilton}  we get
\begin{align*}
H&=L_0+\tilde L_0\\
P&=L_0-\tilde L_0
\end{align*}
$L_n$ satisfy the commutation relations of the {\it Virasoro Algebra}:
\begin{align*}
[L_n,L_m]=(n-m)L_{n+m}+\frac{c_L}{12}(n^3-n)\delta_{n,-m}
\end{align*}
where the Central charge is
\begin{align*}
c_L=1+6Q^2.
\end{align*}
The operators $L_n$ are densely defined  in $\caH$  and closable. They satisfy $L_{-n}=L_n^\ast$ i.e. we have a unitary representation $\caR$ of the Virasoro Algebra on $\caH$.

\subsection{Spectrum}

By Fourier transform in the $c$ variable we represent $\Psi\in\caH$ as $\{\hat\Psi(p)\}_{p\in \R}$ with
$\hat\Psi(p)\in L^2(d\P)$:
\begin{align*}
(\Psi,\Psi)=\int_\R dp\E|\hat\Psi(p)|^2
\end{align*}
We have 
\begin{align*}
\widehat{(a_0\Psi)}
(p)=-\hf(Q+ip)\hat\Psi(p)
\end{align*}
Hence our representation is reducible
$$
\caR=\int_\R^\oplus\caR_p.
$$ 
We can formally write this as
\begin{align*}
\Psi=\int_\R dp \hat\Psi(p)|p\rangle
\end{align*}
with 
\begin{align*}
\langle p,p'\rangle=\delta(p-p')
\end{align*}
and $|p\rangle$ is a highest weight state 
\begin{align*}
L_0|p\rangle=\Delta_p|p\rangle,\ \ \ L_n|p\rangle=0, n>0
\end{align*}
where $\Delta_p=\frac{1}{4}(Q^2+p^2)$. From \eqref{vertex} we get
\begin{align*}
|p\rangle=UV_{Q+ip}(0).
\end{align*}

\subsection{Liouville Hamiltonian}
Recall that for $\mu>0$
 \begin{align}\label{Utilde}
 UF:=e^{-Qc}\E_\D e^{-\mu \int_\D  e^{\gamma \phi}dz}F
 \end{align}
 provides an isometry
 $$ U_L:\caH\to L^2(d\P(\varphi)dc).
 $$
 which allows us to identify $\caH$ with a subspace of $L^2(d\P(\varphi)dc)$. The Liouville semigroup is defined by
 $$
e^{-tH_L}U=US_{e^{-t}}.
$$
Since the  Liouville functional also satisfies \eqref{mobius} the calculation \eqref{sym} may be repeated to conclude
\begin{align}
(S_qF,G)=(F,S_{\bar q}G)\label{symL}
\end{align}
\begin{proposition} $e^{-tH_L}$ is a contraction semigroup on $ \caH$ with a positive
generator $H_L$.

\end{proposition}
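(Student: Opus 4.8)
The plan is to deduce the semigroup property, contractivity, and positivity of the generator from the algebraic and symmetry properties of the dilations $S_q$ together with the already-established free-field ($\mu=0$) statement. I would work on a convenient dense subspace $\caD\subset\caF_\D$ (for instance the span of the separable functionals \eqref{separable}, or bounded cylinder functionals with enough decay in $c$), on which all quantities below are finite; the resulting bounds then extend to $\caH$ by density. First the purely algebraic facts: since $s_qs_{q'}=s_{qq'}$ and dilations fix constants, $S_qS_{q'}F(\phi)=F(s_{qq'}\phi+Q\log|qq'|)=S_{qq'}F(\phi)$, so $\{S_{e^{-t}}\}_{t\ge 0}$ is a one-parameter semigroup with $S_1=\mathrm{id}$, mapping $\caF_\D$ into $\caF_{e^{-t}\D}\subset\caF_\D$. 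By \eqref{symL} with $q=\bar q=e^{-t}$ each $S_{e^{-t}}$ is symmetric for the Liouville form $(\cdot,\cdot)$; writing $S_{e^{-t}}=S_{e^{-t/2}}S_{e^{-t/2}}$ and using OS-positivity (Proposition \ref{OS}),
\begin{align}\label{planpos}
(F,S_{e^{-t}}F)=(S_{e^{-t/2}}F,S_{e^{-t/2}}F)=\|S_{e^{-t/2}}F\|^2\ge 0 ,
\end{align}
so $t\mapsto(F,S_{e^{-t}}F)$ is real and nonnegative.

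The heart of the proof is a uniform-in-$t$ bound on $\|S_{e^{-t}}F\|$, and this is where $\mu>0$ enters. Since $\int_\D e^{\gamma\phi}\,dz\ge 0$ we have $e^{-\mu\int_\D e^{\gamma\phi}dz}\le 1$, so from the formula \eqref{Utilde} for $U_L$, the pointwise identity $|S_{e^{-t}}G|=S_{e^{-t}}|G|$, and $c\in\R$,
\begin{align}\label{plandom}
\bigl|U_L\,S_{e^{-t}}F\bigr|\ \le\ e^{-Qc}\,\E_\D\,\bigl|S_{e^{-t}}F\bigr|\ =\ e^{-Qc}\,\E_\D\,S_{e^{-t}}|F|\ =\ U_0\,S_{e^{-t}}|F| ,
\end{align}
where $U_0F:=e^{-Qc}\E_\D F$ is the free-field isometry and $\|\cdot\|_0:=\|U_0\,\cdot\,\|_{L^2(d\P(\varphi)dc)}$ its norm. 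Squaring and integrating, then using that in the free theory $S_{e^{-t}}$ is a contraction (equivalently $H\ge 0$, manifest from Proposition \ref{hamilton}), yields
\begin{align*}
\|S_{e^{-t}}F\|\ \le\ \bigl\|S_{e^{-t}}|F|\bigr\|_0\ \le\ \bigl\||F|\bigr\|_0=:C_F<\infty\qquad\text{for all }t\ge 0 .
\end{align*}

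Combining this uniform bound with symmetry, the semigroup law and Cauchy--Schwarz for the positive form $(\cdot,\cdot)$, one has $\|S_{e^{-t}}F\|^2=(F,S_{e^{-2t}}F)\le\|F\|\,\|S_{e^{-2t}}F\|$, and iterating $n$ times,
\begin{align*}
\|S_{e^{-t}}F\|\ \le\ \|F\|^{1-2^{-n}}\,\|S_{e^{-2^n t}}F\|^{2^{-n}}\ \le\ \|F\|^{1-2^{-n}}\,C_F^{2^{-n}}\ \xrightarrow[n\to\infty]{}\ \|F\| .
\end{align*}
Hence $\|S_{e^{-t}}F\|\le\|F\|$; in particular $S_{e^{-t}}\caN\subset\caN$, so $S_{e^{-t}}$ descends to $\caF_\D/\caN$, extends to a contraction of $\caH$, and $e^{-tH_L}$ (defined by $e^{-tH_L}U_L=U_LS_{e^{-t}}$) is a well-defined semigroup of positive self-adjoint contractions of $\caH$.

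It remains to establish strong continuity and conclude. For $F\in\caD$, by symmetry and the reality of \eqref{planpos},
\begin{align*}
\|S_{e^{-t}}F-F\|^2=\|S_{e^{-t}}F\|^2-2(F,S_{e^{-t}}F)+\|F\|^2 ,
\end{align*}
so it suffices to show $\|S_{e^{-t}}F\|^2\to\|F\|^2$ and $(F,S_{e^{-t}}F)\to\|F\|^2$ as $t\to 0^+$; both are continuity statements for expectations of the family $\phi\mapsto F(s_{e^{-t}}\phi-Qt)$, dominated uniformly near $t=0$, and follow from continuity of the field under dilation together with dominated convergence. With the uniform contraction this gives a strongly continuous contraction semigroup. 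Finally, $0\le e^{-tH_L}\le I$ as quadratic forms (by \eqref{planpos} and the contraction), hence $I-e^{-tH_L}\ge 0$, so $(F,H_LF)=\lim_{t\downarrow 0}t^{-1}(F,(I-e^{-tH_L})F)\ge 0$ on the domain of the generator; thus $H_L$ is self-adjoint with $H_L\ge 0$. I expect the uniform bound \eqref{plandom}--$C_F$ to be the main obstacle: it is exactly what prevents the symmetric semigroup from blowing up, it is the one place the free-field result must be imported, and making it rigorous also forces the choice of a dense domain $\caD$ on which $|F|$ still lies in the form domain of the free field so that $C_F<\infty$.
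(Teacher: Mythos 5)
Your proof is correct and follows the same overall skeleton as the paper's sketch: symmetry of $S_q$ from \eqref{symL}, positivity of the form via Proposition \ref{OS}, and the Cauchy--Schwarz iteration $\|S_{e^{-t}}F\|\leq\|F\|^{1-2^{-k}}\,C^{2^{-k}}$ with a $t$-uniform constant $C$. The genuine difference is in how that uniform bound is produced. The paper stays inside the interacting theory: it takes the dense set $\{F:\langle F^2\rangle<\infty\}$ and bounds $\|S_{e^{-2^kt}}F\|^2=\langle \Theta(S_{e^{-2^kt}}F)\,S_{e^{-2^kt}}F\rangle\leq\langle (S_{e^{-2^kt}}F)^2\rangle=\langle F^2\rangle$, the last equality being the dilation (M\"obius) invariance of the Liouville functional, Corollary \ref{mobius}. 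You instead discard the interaction via $0\leq e^{-\mu\int_\D e^{\gamma\phi}dz}\leq 1$, dominate $|U_LS_{e^{-t}}F|\leq U_0S_{e^{-t}}|F|$ pointwise, and import the $\mu=0$ result ($U_0$ unitary by Proposition \ref{Ufree} and $H\geq \hf Q^2>0$ by Proposition \ref{hamilton}) to get the constant $C_F=\||F|\|_0$. What each buys: the paper's route needs only the M\"obius covariance of $\langle\cdot\rangle$ plus the (implicit) reflection/Cauchy--Schwarz step $\langle\Theta G\,G\rangle\leq\langle G^2\rangle$ and density of $\{\langle F^2\rangle<\infty\}$; your route leans on the already-established free-field Hamiltonian and instead needs density in $\caH$ of the domain where $\||F|\|_0<\infty$ --- a point you rightly flag and which should be checked, e.g.\ on the span of the functionals \eqref{separable} with $\psi$ compactly supported in $c$, at the same level of rigor as the paper's own density assertion. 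You also go beyond the paper's sketch by addressing strong continuity and the self-adjointness and positivity of the generator $H_L$, which the paper leaves implicit.
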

 \begin{proof} 
({\it scetch}). Let $F\in \caF_\D$ satisfy $ \langle F^2\rangle<\infty$. Such $F$ form a  dense set in $\caH$. Let $\psi=U_LF$. We have
 \begin{align}
 \|S_{e^{-t}}F\|&=(S_{e^{-t}}F,S_{e^{-t}}F)^\hf=(F,S_{e^{-2t}}F)^\hf\leq \| F\|^\hf\|S_{e^{-2t}}F\|^\hf \leq\dots\leq  \|F\|^{1-2^{-k}}\|S_{e^{-2^kt}}F\|^{2^{-k}}\nonumber
 \end{align}
Now
 $$
 \|S_{e^{-2^kt}}F\|^{2}= \langle S_{e^{-2^kt}}F\Theta S_{e^{-2^kt}}F  \rangle\leq \langle(S_{e^{-2^kt}}F)^2\rangle
 = \langle F^2\rangle
 $$
 by conformal invariance of $\langle \cdot\rangle$. Taking $k\to\infty$ we conclude $ \|S_{e^{-t}}F\|
 \leq \|F\|$.
 \end{proof}
  
\subsection{Feynman-Kac formula}
In this section we proceed heuristically.  Let 
 $$\phi_\tau(\theta)=\phi(e^{-\tau+i\theta})-Q\tau
 $$ 
 and consider observable
 $$
 F_\tau(\phi)=f(\phi_\tau).
 $$
 Then
 $$
 S_{e^{-t}}F_\tau=F_{\tau+t}.
 $$
 Taking $\tau=0$ we get 
 $$
 UF_t=e^{-tH_L}f.
 $$
 Recall that
 $$
 e^{\gamma \phi}:=\lim_{\epsilon\to 0}\epsilon^{\frac{\gamma^2}{2}}e^{\gamma \phi_\epsilon}.
 $$
 We get, taking into account scaling of $\epsilon$
 $$
\mu \int_\D  e^{\gamma \phi}dz=\mu\int_0^\infty dt\int_0^{2\pi} d\theta e^{-(2+\frac{\gamma^2}{2})\tau}e^{\gamma \phi(e^{-t+i\theta})}=\mu\int_0^\infty dt\int_0^{2\pi}  d\theta e^{\gamma \phi_t(\theta)}:=\int_0^\infty  v(\phi_t)dt.
 $$
 Now we get by Trotter product formula, denoting by $U_0$ the $\mu=0$ map in Proposition \ref{Ufree} 
 \begin{align*}
 U_0e^{-\int_0^t v(\phi_s)ds}S_{e^{-t}}&=\lim_{n\to\infty}U_0e^{-\sum_i\frac{t}{n}v(\phi_{\frac{_{it}}{^n})}}S_{e^{-t}}=\lim_{n\to\infty}U_0(e^{-\frac{t}{n}v(\phi_{0})}S_{e^{-t/n}})^n\\
 &=\lim_{n\to\infty}(e^{-\frac{t}{n}v(\phi_{0})}e^{-t/nH})^nU_0=e^{-t(H+v)}U_0.
 \end{align*}
 Hence in particular
 \begin{align}\label{semig1}
 U_0e^{-\int_0^T v(\phi_s)ds}S_{e^{-t}}F=U_0e^{-\int_0^t v(\phi_s)ds}S_{e^{-t}}Fe^{-\int_0^{T-t} v(\phi_s)ds}=e^{-t(H+v)}(U_0e^{-\int_0^{T-t} v(\phi_s)ds}F).
 \end{align}
  Taking $T\to\infty$ in \eqref{semig1} we then get
  \begin{align}\label{semig2}
U S_{e^{-t}}=e^{-t(H+v)}U
 \end{align}
 i.e. the Liouville Hamiltonian is given by
 \begin{align}
 H_L=H+v
 \end{align}
 This is only formal. What is the precise definition of $v$? Formally it is just  
 $$
 v=\int  :e^{\gamma\varphi(\theta)}:d\theta.
 $$
Note that 
$$
\E :e^{\gamma\varphi(\theta)}: :e^{\gamma\varphi(\theta')}:=|e^{i\theta}-e^{i\theta'}|^{-\gamma^2}.
$$
Hence $v$ is not defined in the Fock space for $\gamma\geq 1$ since $\|v1\|=\infty$.  

\subsection{Representation Theory} 
  Let $\caV$ be the linear span of the vectors 
  $
  U( \prod_{i=1}^n V_{\alpha_i}(z_i))
  $ with $|z_i|<1$. Then 
$$
L_n=\oint_{|z|=r} z^{n+1}T(z).
$$
acts on $\caV$ by taking $1-r$ small enough. 
The
Ward identities imply the  {\it Virasoro algebra} commutation rules on $\caV$:
$$
[L_m,L_n]=(m-n)L_{m+n}+\frac{_{c_L}}{^{12}}\delta_{m,-n}.
$$
The operators satisfy $L_n^\ast=L_{-n}$ on $\caV$. A major challenge is to find a common dense domain in $\caH$ for these operators so that the representation is unitary and then study its reduction to irreducibles. It is conjectured \cite{rib} that
$$\caH=\int_{\R_+}^\oplus \caH_P\ dP$$
where $\caH_P$ is a highest weight module $M_P={\rm span}\{L_{n}\psi_P, n\leq 0\}$,  $L_0\psi_P=\Delta_{Q+iP}\psi_P$, $L_n\psi_P=0$, $n>0$. 
As in the $\mu=0$ case  $\psi_P$ would be a generalized eigenfunction for $L_0$ i.e. not a vector in $\caH$. It  formally corresponds to the vertex operator $V_{Q+iP}$ which saturates the Seiberg bound. In \cite{DKRV2} these were constructed for $P=0$. It would be nice to understand the complex case.

Note that the spectrum for Liouville is $\R_+$ and not $\R$ as for the MFF. This is due to the potential barrier at positive $c$. Consider a toy model where we keep only the $c$ degree of freedom i.e. the operator
\begin{align*}
H&=\hf(-\frac{d^2}{dc^2}+Q^2)+\mu e^{\gamma c}
\end{align*}
on $L^2(\R)$. The spectrum of $H$ is determined by the $c\to -\infty$ asymptotics of $H$ i.e. it is $ [\hf Q^2,\infty)$ like in the $\mu=0$ case.  In the $\mu=0$ case the spectrum has  degenarcy two: the (generalized) eigenfunctions are $e^{\pm ipc}$.  In the $\mu>0$ case 
as $c\to -\infty$ the eigenfunctions are linear combinations of these plane waves but as  $c\to \infty$ they have to vanish. This means the degeneracy is one and asymptotics at $c\to -\infty$ is $\psi_p(c)=e^{ipc}+R(p)e^{-ipc}$ and $p\in\R_+$. It is a challenge to extend this analysis to the full Liouville Hamiltonian.

\subsection{DOZZ-conjecture} In conformal field theory it is believed \cite{BPZ} that all correlation functions are determined by the knowledge of primary fields (i.e. spectrum of representations) and their three point functions. For the latter  there is a remarkable conjecture due to Dorn, Otto, Zamolodchikov and  Zamolodchikov \cite{Do,ZZ} in Liouville theory. By M\"obius invariance 
$$
\langle V_{\alpha_1}(z_1)V_{\alpha_2} (z_2)V_{\alpha_3}(z_3)\rangle
= |z_1-z_2|^{2 \Delta_{12}} |z_2-z_3|^{2 \Delta_{23}} |z_1-z_3|^{2 \Delta_{13}} C_{\gamma}( \alpha_1, \alpha_2, \alpha_3)
$$
where  $\Delta_{12}= \Delta_{\alpha_3}-\Delta_{\alpha_1}-\Delta_{\alpha_2}  $ etc. The {\it three point structure constants} and be obtained from 
\begin{align}
C_{\gamma}(\alpha_1,\alpha_2,\alpha_3)&=\lim_{z_3\to\infty} |z_3|^{4 \Delta_3} \langle       V_{\alpha_1}(0)  V_{\alpha_2}(1) V_{\alpha_3}(z_3) \rangle \label{Climit}
\end{align}
which leads to the following expression in terms of Multiplicative chaos
$$
C_{\gamma}( \alpha_1, \alpha_2, \alpha_3)=
const.\mu^{-s}\Gamma(s)\,
\E\,  Z^{-s}
$$
with 
$$
Z=\int |z|^{-\alpha_1\gamma}|z-1|^{-\alpha_2\gamma}\hat g(z)^{-\frac{\gamma}{4}\sum_{i=1}^{3}\alpha_i}M_{\hat g,\gamma}(dz)
.$$ 
The   {DOZZ Conjecture} gives an explicit formula for $C_{\gamma}( \alpha_1, \alpha_2, \alpha_3)$. It is based on analyticity and symmetry  assumptions  that lack proofs. One of the ingredients in its derivation was recently proved in \cite{krv} namely  the so-called {\it BPZ equations} \cite{BPZ})  for the vertex operators $V_{- \frac{\gamma}{2}}$ and $V_{- \frac{2}{\gamma}}$ (in the language of CFT, $V_{- \frac{\gamma}{2}}$ and $V_{- \frac{2}{\gamma}}$ are called a degenerate fields). More precisely, setting 
$F(z,{\bf z}):=\langle   V_{- \frac{\gamma}{2}}(z) \prod_i V_{\alpha_i}(z_i)   \rangle,$ we prove
\begin{align*}
 (\frac{_4}{^{\gamma^2}}\partial_{z}^2   + \sum_k (\frac{\Delta_{\alpha_k}}{(z-z_k)^2}   +  
 \frac{1}{z-z_k}  \partial_{z_k}) )F     =  0.
\end{align*}
and the same equation for $\langle   V_{- \frac{2}{\gamma}}(z) \prod_i V_{\alpha_i}(z_i)   \rangle$ with $\frac{4}{{\gamma^2}}$ replaced by $\frac{{\gamma^2}}{4}$.

The proof of these equations proceeds as the proof of the Ward identities: differentiation of the correlation function brings down a $\partial_z\phi(z)$ which then can be integrated by parts in the Gaussian measure. Note that we are not postulating that e.g. $V_{- \frac{\gamma}{2}}(z)$ is a degenerate field but rather proving it in the sense that it satisfies the expected equation. 

Using the BPZ equation, we recover an explicit  formula found earlier in the physics literature for the 4 point correlation functions $\langle    V_{-\frac{\gamma}{2}}  (z)  \prod_{i=1}^3 V_{\alpha_i}(z_i)  \rangle$ and  $\langle    V_{-\frac{2}{\gamma}}  (z)  \prod_{i=1}^3 V_{\alpha_i}(z_i)  \rangle$ in terms of the 3-point structure constants.   Any four point function is fixed by M\"obius invariance up to a single function depending on the cross ratio of the points. Specializing to the case we are interested in, we get
\begin{align}
  \langle    V_{-\frac{\gamma}{2}}  (z)  \prod_{l=1}^3 V_{\alpha_l}(z_l)  \rangle 
 & = |z_3-z|^{- 4 \Delta_{-\frac{\gamma}{2}}}  |z_2-z_1|^{ 2 (\Delta_3-\Delta_2-\Delta_1-\Delta_{-\frac{\gamma}{2}})  } |z_3-z_1|^{2(\Delta_2+\Delta_{-\frac{\gamma}{2}} -\Delta_3 -\Delta_1  )} \\ &\times |z_3-z_2|^{2 (\Delta_1+\Delta_{-\frac{\gamma}{2}}-\Delta_3-\Delta_2)} G\left ( \frac{(z-z_1)(z_2-z_3)}{ (z-z_3) (z_2-z_1)}  \right )  \label{confinv}
\end{align}
where $\Delta_{-\frac{\gamma}{2}}= -\frac{\gamma}{4} (Q+\frac{\gamma}{4})$ and $\Delta_l= \frac{\alpha_l}{2} (Q- \frac{\alpha_l}{2})$.
We can recover 
$G(z)$ as the following limit
\begin{align}
G(z)&= 
\lim_{ z_3 \to\infty }|z_3|^{4 \Delta_3} \langle    V_{-\frac{\gamma}{2}}  (z)  V_{\alpha_1}(0)  V_{\alpha_2}(1) V_{\alpha_3}(z_3)  \rangle   \label{Glimit}
\end{align}
and this becomes in terms of multiplicative chaos
  $$
  G(z)=|z|^{\frac{\gamma \alpha_1}{2}}   |z-1|^{\frac{\gamma \alpha_2}{2}}  \mathcal{T}(z) 
  $$ 
  where $ \mathcal{T}(z)$ is given by 
\begin{align}\label{Tdefi}
  \mathcal{T}(z) &=
  const. \mu^{-s+\frac{1}{2}}  \gamma^{-1}\Gamma(s-\hf)  \E [R(z)^{\hf-s}]
 \end{align}
and
\begin{equation*}
R(z)= 
 \int e^{\gamma X(x)- \frac{\gamma^2}{2} \E[X(x)^2] } \frac{|x-z|^{\frac{\gamma^2}{2}}}{ |x|^{\gamma \alpha_1}  |x-1|^{\gamma \alpha_2}  }  \hat{g}(x)^{1+\frac{\gamma^2}{8} - \frac{\gamma}{4} \sum_l \alpha_l } dx.
\end{equation*}
In particular one has
\begin{equation}\label{T(0)}
\mathcal{T}(0)=C(\alpha_1-\frac{_\gamma}{^2},\alpha_2,\alpha_3).
\end{equation}
A little algebra then shows that the BPZ equation becomes   the hypergeometric equation for  $\mathcal{T}$ 
\begin{equation}\label{hypergeo}
z(1-z)\partial_{zz}^2 \mathcal{T}(z)+  (a_1+a_2z)\partial_z \mathcal{T}(z) +a_3 \mathcal{T}(z)=0
\end{equation}
where $a_i$ are are explicit  constants depending on $\alpha_i$ and $\gamma$. By analysing positive solutions to the hypergeometric equation  one finds that 
\begin{equation*}
\mathcal{T}(z)=\lambda_1 |F_{-}(z)|^2 +\lambda_2 \text{Re}(F_{-}(z)F_{+}(z))+\lambda_3 \text {Im}(F_{-}(z)F_{+}(z)) +\lambda_4  |F_{+}(z)|^2
\end{equation*}
where $\lambda_i \in \R$ and $F_\pm$ are definite hypergeometric functions. It remains to determine the coefficients $\lambda_i$. One has from the asymptotics of  $F_\pm$ as $z\to 0$
\begin{align*}
 \mathcal{T}(z) = \lambda_1+\lambda_2 \text{Re} (z^{a})+ \lambda_3 \text{Im} (z^{a}) + \lambda_4 |z|^{2a} + o(|z|^{2a}  ) 
\end{align*}
where $a=\hf \gamma(Q-\alpha_1)$. On the other hand we may study the explicit expression \eqref{Tdefi} as $z\to 0$ by multiplicative chaos techniques. This gives 
\begin{equation}
\mathcal{T}(z)= C(\alpha_1-\frac{\gamma}{2},\alpha_2,\alpha_3) -\mu  \frac{\pi}{  l(-\frac{\gamma^2}{4}) l(\frac{\gamma \alpha_1}{2})  l(2+\frac{\gamma^2}{4}- \frac{\gamma \alpha_1}{2}) } C(\alpha_1+\frac{\gamma}{2}, \alpha_2,\alpha_3) |z|^{2a}+ o(|z|^{2a})\label{z=0exp}.
\end{equation}
where $l(x)=\frac{\Gamma (x)}{\Gamma(1-x)}$. 
We conclude that $\lambda_1=C(\alpha_1-\frac{\gamma}{2},\alpha_2,\alpha_3),$
 $\lambda_2=\lambda_3=0$ and 
 $$\lambda_4=-\mu  \frac{\pi}{  l(-\frac{\gamma^2}{4}) l(\frac{\gamma \alpha_1}{2})  l(2+\frac{\gamma^2}{4}- \frac{\gamma \alpha_1}{2}) }  C(\alpha_1+\frac{\gamma}{2}, \alpha_2,\alpha_3).$$
From this we can deduce the following corollary on the 3 point structure constants (this argument is   called Teschner's trick \cite{teschner} in the literature):
\begin{equation*}
\frac{C(\alpha_1+\frac{\gamma}{2},\alpha_2,\alpha_3)}{C(\alpha_1-\frac{\gamma}{2},\alpha_2,\alpha_3)}  
= - \frac{1}{\pi \mu}\frac{l(-\frac{\gamma^2}{4})  l(\frac{\gamma \alpha_1}{2}) l(\frac{\alpha_1\gamma}{2}  -\frac{\gamma^2}{4})  l(\frac{\gamma}{4} (-\alpha_1+\alpha_2+\alpha_3- \frac{\gamma}{2}) )   }{l( \frac{\gamma}{4} (\alpha_1+\alpha_2+\alpha_3-\frac{\gamma}{2} - 2Q)  ) l( \frac{\gamma}{4} (\alpha_1+\alpha_2-\alpha_3-\frac{\gamma}{2} ))  l( \frac{\gamma}{4} (\alpha_1+\alpha_3-\alpha_2-\frac{\gamma}{2} )) }. 
\end{equation*}
Similarly the other BPZ equation gives rise to a formula for $\frac{C(\alpha_1+\frac{2}{\gamma},\alpha_2,\alpha_3)}{C(\alpha_1-\frac{2}{\gamma},\alpha_2,\alpha_3)}$.
Our derivation of these identities holds only if the Seiberg bounds are satisfied i.e. $ \alpha_1< Q-\frac{\gamma}{2}$ and $\sum_l \alpha_l>2Q+\frac{\gamma}{2}$ and similarly for the other identity. If one makes the assumption that 
$C( \alpha_1, \alpha_2, \alpha_3)$  
is an analytic function in the $\alpha_i$ and these identities hold then one may derive an explicit formula for 3 point structure constants, the DOZZ formula. 

 The DOZZ conjecture and the representation content of Liouville theory are signals of its integrability. They remain the main challenge for the probabilistic approach to the Liouville model. 

\section{References}

\renewcommand{\refname}{}    

\frenchspacing


\begin{thebibliography}{7}

\bibitem{AGT}
L. F. Alday, D. Gaiotto, and Y. Tachikawa. Liouville Correlation Functions from Four Dimensional Gauge Theories, \emph{Lett. Math. Phys.}, \textbf{91} 167-197 (2010).

\bibitem{BPZ} A.A. Belavin, A.M.Polyakov,  A.B.Zamolodchikov : Infinite conformal symmetry in two-dimensional quantum field theory, \textit{Nuclear Physics B} \textbf{241} (2)(1984), 333-380 . 





\bibitem{DKRV}F.David,  A.Kupiainen,  R.Rhodes,  V.Vargas, 
  Liouville Quantum Gravity on the Riemann sphere, \textit{Communications in Mathematical Physics}
, \textbf{342,} (2016),  869-907.

\bibitem{DKRV2}
F.David,  A.Kupiainen,  R.Rhodes,  V.Vargas, Renormalizability of Liouville Quantum Gravity at the Seiberg bound, {http://arxiv.org/abs/1506.01968}.

\bibitem{Do}  H.Dorn,  H.-J.Otto: Two and three-point functions in Liouville theory,   \textit{Nucl.Phys.} \textbf{ B429} (1994) 375-388

 
\bibitem{cf:DuSh} Duplantier, B., Sheffield, S.: Liouville Quantum Gravity and KPZ, \emph{Inventiones Mathematicae} \textbf{185} (2) (2011) 333-393.

\bibitem{DRSVreview}
  B.Duplantier,  R.Rhodes,  S.Sheffield,  V.Vargas, Log-correlated Gaussian fields: an overview, arXiv:1407.5605.
%


\bibitem{KPZ}  V.G.Knizhnik,  A.M.Polyakov,  A.B.Zamolodchikov,  Fractal structure of 2D-quantum gravity, \textit{Modern Phys. Lett A}, \textbf{3}(8) (1988), 819-826.

\bibitem{krv} A.Kupiainen,  R.Rhodes,  V.Vargas,  Conformal Ward and BPZ Identities for Liouville quantum field theory, 	arXiv:1512.01802 [math.PR]

\bibitem{LeGall} 
Le Gall J.-F.: Uniqueness and universality of the Brownian map, \emph{Annals of Probability} \textbf{41} (4), 2880-2960.  

\bibitem{Mier}
Miermont G.: The Brownian map is the scaling limit of uniform random plane quadrangulations, \emph{Acta Mathematica} \textbf{210} (2), 319-401.  

 \bibitem{MS} J. Miller,  S. Sheffield. Liouville quantum gravity and the Brownian map I: The QLE(8/3,0) metric. ArXiv e-prints, July 2015, 1507.00719.
 


\bibitem{OS}K. Osterwalder and R. Schrader, Axioms for Euclidean Green’s functions, II, \textit{Commun.
Math. Phys.}, \textbf{42}  (1975), 281–305
\bibitem{Pol}
 A.M.Polyakov, Quantum geometry of bosonic strings, \textit{Phys. Lett. } \textbf{103B} 207 (1981).

\bibitem{Rib}
 S.Ribault, Conformal Field theory on the plane, arXiv:1406.4290. 
 
 \bibitem{review} Rhodes R., Vargas, V.: Gaussian multiplicative chaos and applications: a review,  to appear in \emph{Probability Surveys}, arXiv:1305.6221v1.

\bibitem{teschner}
 J.Teschner, On the Liouville three point function, \textit{Phys. Letters} {\textit B363} (1995), 65-70.
\bibitem{ZZ} 
 A.B. Zamolodchikov, Al.B. Zamolodchikov,  Structure constants and conformal bootstrap in
Liouville field theory, \textit{Nucl. Phys. B} \textbf{477}, 577-605 (1996).



\end{thebibliography}
\end{document}